\newcommand{\pushright}[1]{\ifmeasuring@#1\else\omit\hfill$\displaystyle#1$\fi\ignorespaces}
\newcommand{\pushleft}[1]{\ifmeasuring@#1\else\omit$\displaystyle#1$\hfill\fi\ignorespaces}
\newtheorem{theorem}{Theorem}[section]
\newtheorem{lemma}[theorem]{Lemma}
\newtheorem{proposition}[theorem]{Proposition}
\newtheorem{definition}[theorem]{Definition}
\newtheorem{remark}[theorem]{Remark}
\theoremstyle{definition}
\newcommand\numberthis{\addtocounter{equation}{1}\tag{\theequation}}
\def\bR{\mathbb{R}}
\def\bC{\mathbb{C}}
\def\bN{\mathbb{N}}
\def\cF{\mathcal{F}}
\def\cS{\mathcal{S}}
\def\cM{\mathcal{M}}
\def\rd{\bR^d}
\def\rdd{\bR^{2d}}
\def\la{\langle}
\def\ra{\rangle}
\def\lc{\left(}
\def\rc{\right)}
\def\wt{\widetilde}
\def\*b{*_{\bullet}}
\def\w{\mathrm{w}}
\def\op{\mathrm{op}}
\def\Bd'{B_{\delta'}}
\def\cBd'{\bar{B}_{\delta'}}
\def\Sp{\mathrm{Sp}(d,\bR)}
\def\Mp{\mathrm{Mp}(d,\bR)}
\def\muS{\mu(S)}
\newcommand{\GLL}{\mathrm{GL}(2d,\bR)}
\def\smo{\setminus\{0\}}
\def\mrd{\bR^{d\times d}}
\def\Sp{\mathrm{Sp}(d,\bR)}
\def\Mp{\mathrm{Mp}(d,\bR)}
\def\ird{\int_{\rd}}
\def\irdd{\int_{\rdd}}
\def\wh{\widehat}
\def\cU{\mathcal{U}}
\def\Sjo{M^{\infty,1}}
\newcommand{\hbi}{\frac{i}{\hbar}}
\def\te{\mathfrak{E}}
\def\rte{\bR \setminus \te}
\def\loc{\mathrm{loc}}
\def\tauhz0{\widehat{\mathcal{T}}^\hbar(z_0)}
\def\tauhz{\widehat{\mathcal{T}}^\hbar(z)}
\begin{document}
	
\title[On a novel time slicing approximation for Feynman path integrals]{On the convergence of a novel family of time slicing approximation operators for Feynman path integrals}
\author{S. Ivan Trapasso}
\address{MaLGa Center - Department of Mathematics, University of Genoa, via Dodecaneso 35, 16146 Genova, Italy}
\email{salvatoreivan.trapasso@unige.it}
\subjclass[2020]{81S40, 81S30, 35S05, 42B35, 47L10}
%\date{}
\keywords{Feynman path integrals, generalized metaplectic operators, Schr\"odinger equation,  quadratic Hamiltonian, Gabor wave packets, modulation spaces}
%\date{\today} 

\begin{abstract}
In this note we study the properties of a sequence of approximate propagators for the Schr\"odinger equation, in the spirit of Feynman's path integrals. Precisely, we consider Hamiltonian operators arising as the Weyl quantization of a quadratic form in phase space, plus a bounded potential perturbation in the form of a pseudodifferential operator with a rough symbol. It is known that the corresponding Schr\"odinger propagator is a generalized metaplectic operator. This naturally motivates the introduction of a manageable time slicing approximation consisting of operators of the same type. By means of techniques and function spaces of time-frequency analysis it is possible to obtain several convergence results with precise rates in terms of the mesh size of the time slicing subdivision. In particular, we prove convergence in the norm operator topology in $L^2$, as well as pointwise convergence of the corresponding integral kernels for non-exceptional times.  
\end{abstract}
\maketitle
\section{Introduction}\label{sec intro} The rigorous analysis of Feynman path integrals has been, and still is, a challenging source of intriguing problems for mathematicians from manifold areas. The knowledge accumulated over the last seventy years encompasses several aspects, ranging from foundational issues to applied ones; the interested reader can consult the monographs \cite{AlbeHKM_book_08,Fujiwara_book_17,Mazzucchi_book_09} for in-depth studies on the topic. 

From the point of view of operator theory, the \textit{sequential approach} to path integrals essentially deals with providing explicit representation formulas for the Schr\"odinger evolution operator $U(t,s)$ in terms of sequences of approximate propagators on $L^2(\rd)$; the latter are required to converge to $U(t,s)$ in some sense, usually with respect to the strong operator topology. A prime example, inspired by the custom in physics, is to use approximate propagators arising from the Trotter formula \cite{Nelson_64}, namely
\[ 	U(t,s)f =\lim_{n\to \infty} E_n(t,s) f, \quad f \in L^2(\rd), \qquad E_n(t,s) \coloneqq  \left(e^{-\hbi   \frac{t-s}{n}H_{0}}e^{-\hbi \frac{t-s}{n}V}\right)^{n}. \] 
Different choices are of course possible; in fact, a careful design of the approximate propagators is crucial in order to obtain stronger convergence results. For instance, the time slicing approximations introduced by Fujiwara in \cite{Fujiwara_fundsol_79,Fujiwara_duke_80} rely on delicate tools of oscillatory integral operators and semiclassical analysis; the need of sophisticated techniques to handle them is counterweighted by deep convergence results in the norm operator topology and also at the subtler level of kernels, even in the semiclassical limit $\hbar \to 0$. 

\subsection{The role of Gabor analysis} Only in recent times the techniques of Gabor wave packet analysis have been successfully brought into play in the study of mathematical path integrals, leading to relevant advances and new directions to be explored. 

By way of illustration let us mention here the papers \cite{N_convLp_16} and \cite{NT_CMP20} - an expository overview of these and other related results can be found in \cite{T_pathsurvey}. The power of a phase-space approach is fully displayed in the first contribution, where it allows one to ``break the barriers'' of the standard $L^2$ setting and derive convergence results for Fujiwara's time slicing approximations in $L^p$-Sobolev spaces with $p \ne 2$. 

The second paper focuses on the pointwise convergence of integral kernels of the Feynman-Trotter approximate propagators $E_n(t,s)$ introduced above. Generally speaking, this issue has been heuristically conjectured by Feynman himself \cite{Feynman_48,FeynmanHibbs_10} and remained a widely open problem for a long time, at least until the pioneering results by Fujiwara already mentioned above, for sufficiently regular potentials and more refined parametrices. In the paper \cite{NT_CMP20} the problem has been solved for quadratic Hamiltonian and bounded potential perturbations with low regularity, mainly using tools and function spaces of Gabor analysis \cite{CR_book,Gro_book}.

This is a convenient stage where to briefly discuss the basic notions of time-frequency analysis, both for clarity and future reference. Recall that a phase-space representation of a signal $u$ can be obtained as a decomposition with respect to \textit{Gabor wave packets} of the form \[ \pi(z)g(y) = e^{2\pi i \xi \cdot y}g(y-x), \quad z=(x,\xi) \in \rdd, \] where $g$ is a fixed (non-trivial) function on $\rd$ that is well localized in the time-frequency space $\rd \times \wh{\rd} \simeq \rdd$. To be more precise, the function $V_g u (z) \coloneqq \la u, \pi(z)g \ra$ of $z$ resulting from duality pairing between $u\in \cS'(\rd)$ and $\pi(z)g$ with $g \in \cS(\rd)$ is called the \textit{Gabor transform} of $u$. Lifting a problem to the phase space level is usually convenient in many respects; for instance, the Gabor transform of any temperate distribution happens to be a continuous function with at most polynomial growth. 

A concrete, alternative point of view on the procedure just described comes from signal analysis: if $u \in L^2(\rd)$, for instance, then we have explicitly
\[ V_g u (x,\xi) = \ird e^{-2\pi i \xi \cdot y} u(y)\overline{g(y-x)} \, dy = \cF (u\cdot \overline{g(\cdot - x)})(\xi), \] and thus the Gabor transform corresponds to taking the Fourier transform of the variable slice of the signal $u$ obtained by localization near $x$ with a sliding window function $g$; this explains why the Gabor transform is also widely known as the \textit{short-time Fourier transform}. 

\textit{Modulation spaces} can be naturally introduced at this stage as spaces of distributions characterized by prescribed summability conditions for the corresponding phase-space representations \cite{BenyiOko_book}. For $1 \le p \le \infty$ and a fixed $g \in \cS(\rd)\smo$ we set
\[ M^p(\rd) \coloneqq \{ u \in \cS'(\rd) : \| u \|_{M^p} < \infty \}, \quad \|u \|_{M^p} \coloneqq \|V_g u \|_{L^p(\rdd)}. \]
Modulation spaces constitute a family of Banach spaces, increasing in $p$, whose norm is stable under change of window (namely, different choices of $g$ result in equivalent norms). We emphasize that many typical function spaces of harmonic analysis turn out to coincide with (generalized) modulation spaces. The most important example is $M^2(\rd) \simeq L^2(\rd)$; together with the extremal spaces $M^1(\rd)$ (the original Feichtinger algebra \cite{Fei_segal81,Jakobsen_segal18}) and $M^\infty(\rd)$ (the space of \textit{mild distributions} \cite{Fei_mild19}) they provide a convenient framework for basic Fourier analysis - namely, a Banach-Gelfand triple \cite{FLC_triples08}, also known in quantum physics and spectral theory as the formalism of rigged Hilbert spaces. 

\subsection{The Schr\"odinger equation} 
Modulation spaces are widely used in the aforementioned papers \cite{N_convLp_16,NT_CMP20}  to rigorously frame the problem of path integrals from a phase space perspective, but also to tune the regularity of potential perturbations, following an established series of results for the Schr\"odinger equation - see the monograph \cite{CR_book} for a state-of-the-art account. 

In this note we consider the Cauchy problem for the Schr\"odinger equation 
\begin{equation}\label{eq schro base}
\begin{cases}
i \partial_t \psi = 2\pi(H_0 + V)\psi \\ 
\psi(s,x)= f(x)
\end{cases}
\end{equation}
with initial datum $f\in \cS(\rd)$ at time $t=s$. Note that we fixed $\hbar=1/2\pi$ here - some comments on this choice are given below. The Hamiltonian operator consists of two parts. First, we assume $H_0 = Q^{\w}$, namely $H_0$ is the Weyl quantization of a homogeneous quadratic polynomial $Q:\rdd \to \bR$. We also consider a potential perturbation of the form $V=\sigma(t,\cdot)^{\w}$, where the symbol $\sigma(t,\cdot)=\sigma_t(\cdot)$ belongs to the so-called \textit{Sj\"ostrand class} $\Sjo(\rdd)$ \cite{Sjo_94}, that is a modulation space characterized by a mixed Lebesgue-type norm in phase-space: 
\[ \| \sigma_t \|_{\Sjo} \coloneqq \irdd \sup_{z \in \rdd} |V_\Phi \sigma_t (z,\zeta)| \, d\zeta < \infty, \quad \Phi \in \cS(\rdd)\smo. \]
We additionally assume that the correspondence $\bR \ni t\mapsto \sigma(t,\cdot) \in M^{\infty,1}(\rdd)$ is continuous in a mild sense - see Definition \ref{def nar conv} below for further details. 

For concreteness, let us discuss here the case where $V$ is a standard multiplication operator by a function in the Sj\"ostrand class (see Section \ref{sec weyl}). To give a flavour of typical potentials in this family we remark that the latter includes bounded functions that locally coincide with the Fourier transform of an integrable function, hence also continuous (cf.\ Section \ref{sec sjo}). Actually, this is the least regularity condition on potentials, since members of $\Sjo(\rd)$ are not differentiable in general; nevertheless, a relevant subclass of $\Sjo(\rd)$ is the space $C^\infty_{\mathrm{b}}(\rd)$ of smooth bounded functions with bounded derivatives of any order. We also have $\cF \cM(\rd) \subset \Sjo(\rd)$, where $\cF\cM(\rd)$ is the space of Fourier transforms of complex finite measures (cf.\ \cite[Proposition 3.4]{NT_CMP20}). Potentials of this type already appeared several times in the literature on mathematical path integrals, especially in connection with the Parseval duality approach introduced by It\^o \cite{Ito_61} and developed by Albeverio et al.\ \cite{Albeverio_inventiones_77,AlbeHKM_book_08}. \medskip

Let us consider first the Hamiltonian $H_0=Q^\w$ as above. This setting covers some fundamental examples like the free particle and the harmonic oscillator, possibly in the presence of a uniform magnetic field. A classical result of harmonic analysis states that the evolution operator $U_0(t,s) = e^{-2\pi i (t-s) H_0}$ is a \textit{metaplectic operator} associated with the phase space flow $\bR \ni \tau \mapsto S_{\tau}\in \Sp$ of the corresponding classical system, that is  $U_0(t,s)=c(t-s)\mu(S_{t-s})$ for some $c=c(t-s) \in \bC$ with $|c|=1$ - see \cite{dG_sympmeth11,Folland_book} and Section \ref{sec met} for an expanded account. \smallskip

The structure of the full Schr\"odinger propagator $U(t,s)$ in the presence of a potential $V$ as above can be investigated by means of standard arguments from the theory of operator semigroups that have their roots in the \textit{perturbation method} of quantum mechanics. Let us briefly recall the main steps for the sake of clarity - more details and references are given in Section \ref{sec gen met} below. The problem \eqref{eq schro base} can be recast in integral form in accordance with Duhamel's principle, namely \begin{equation}
\psi(t,x) = U_0(t,s)f(x) -2\pi i \int_s^t U_0(t,\tau)\sigma_\tau^{\w}\psi(\tau,x)d\tau. 
\end{equation} 
After setting $\varphi(t,x)\coloneqq U_0(s,t)\psi(t,x)$ (which leads to the the so-called interaction picture) and using the evolution property of $U_0$ we have
\begin{equation}\label{eq schro volt}
\varphi(t,x) = f(x) - 2\pi i \int_s^t U_0(s,\tau)\sigma_\tau^{\w}U_0(\tau,s)\varphi(\tau,x)d\tau.
\end{equation}
The symplectic covariance of Weyl calculus (see \eqref{eq symp cov weyl} below) yields
\begin{equation}\label{}
U_0(s,\tau)\sigma_\tau^{\w}U_0(\tau,s) = (\sigma_\tau \circ S_{\tau-s})^{\w} \eqqcolon b(\tau,s)^{\w}.
\end{equation}
The solution of the Volterra integral equation \eqref{eq schro volt} is thus given by 
\begin{equation}\label{}
\varphi(t,x) = a^{\w}(t,s)f(x), 
\end{equation}
where the symbol $a(t,s)$ can be expressed as a Dyson-Phillips expansion:
\begin{equation}\label{eq def ats}
a(t,s) \coloneqq 1 + \sum_{n\ge 1} \lc -2\pi i\rc^n \int_s^t \int_s^{t_1} \ldots \int_s^{t_{n-1}} b(t_1,s)\#\cdots \# b(t_n,s) \, dt_n \ldots dt_1,
\end{equation} and $\#$ denotes Weyl's twisted product of symbols (see Section \ref{sec weyl} for further details). It is possible to show that $a(t,s)\in M^{\infty,1}(\rdd)$ (cf.\ Lemma \ref{lem stima norma a} below). Therefore, the original problem is solved by
\begin{equation}\label{eq propag}
\psi(t,x) = U(t,s)f(x), \quad U(t,s) = U_0(t,s)a(t,s)^{\w}. 
\end{equation}
It is intuitively clear that $U(t,s)$ is intimately connected to the homogenous propagator $U_0(t,s)$ described before. The underlying relationship has been completely elucidated in the papers \cite{CGNR_wienerFIO_13,CGNR_genmet14}, leading to the introduction of the class $FIO(S_{t-s})$ of \textit{generalized metaplectic operators} associated with the flow $S_{t-s}$ - more details are collected in Section \ref{sec gen met} for convenience. 

\subsection{Discussion of the main results} Let us now introduce a novel time slicing approximation in the spirit of Feynman's formulation. Inspired by \eqref{eq def ats} and \eqref{eq propag}, for $0<t-s\le T$ it is natural to consider a short-time $FIO$-type approximate propagator $E(t,s)$ such as
\begin{equation}\label{eq def Ets}
	E(t,s) \coloneqq U_0(t,s)e(t,s)^{\w}, 
\end{equation} where the symbol $e(t,s)$ is defined as follows:
\begin{equation}\label{eq def ets}
	e(t,s) \coloneqq \exp \lc -2\pi i \int_s^t b(\tau,s)d\tau \rc. 
\end{equation} We prove below that $e(t,s) \in \Sjo(\rdd)$ and also that it is a good short-time approximation of $a(t,s)$ in \eqref{eq def ats}, precisely $\|a(t,s)-e(t,s)\|_{\Sjo} = O((t-s)^2)$ - cf.\ Lemma \ref{lem est diff a-e} below.

Consider now a subdivision $\Omega = \{t_0,\ldots,t_L\}$ of the interval $[s,t]$ such that $s=t_0 < t_1 < \ldots < t_L = t$. We accordingly define the time slicing approximate propagator as
\begin{equation}\label{def Eots}
	E(\Omega;t,s) \coloneqq E(t_L,t_{L-1})\cdots E(t_1,t_0).
\end{equation} 
It turns out that generalized metaplectic operators are well behaved under composition (see Theorem \ref{thm gen met properties} below), hence $E(\Omega;t,s)$ is again an operator in the same class and there exists a symbol $e(\Omega;t,s) \in \Sjo(\rdd)$ such that $E(\Omega;t,s) = U_0(t,s)e(\Omega;t,s)^{\w}$. To be precise, iterated application of the symplectic covariance property of Weyl calculus yield
\begin{equation}\label{eq eots}
	e(\Omega;t,s)=\wt{e}(t_L,t_{L-1}) \# \cdots \# \wt{e}(t_1,t_0),
\end{equation} where the modified symbols $\wt{e}(t_{j+1},t_j)$, $j=0,\ldots,L-1$, are defined by
\begin{equation}\label{eq etilde}
	\wt{e}(t_{j+1},t_{j}) \coloneqq e(t_{j+1},t_{j})\circ S_{t_{j}-t_0}, \quad j=0,\ldots, L-1.
\end{equation}

We stress that the operators involved in this time slicing approximation are different from those considered by Fujiwara as well as from those associated with the Trotter formula. Nevertheless, the short-time propagator $E(t,s)$ is quite easy to handle thanks to the natural algebraic properties of the $FIO$ family and the symbol $e(\Omega;t,s)$ could be in principle represented as an integral - cf.\ e.g.\ \cite{Wong_weylbook98} for explicit formulas. 

It should be emphasized that, under the same assumptions for \eqref{eq schro base} outlined above, the Feynman-Trotter parametrices happen to be generalized metaplectic operators, as shown in \cite{NT_CMP20}, but they are worse than the novel ones introduced here as far as the short-time approximation power is concerned. This is indeed the core of our first convergence result, that can be stated as follows. 
\begin{theorem}\label{mainthm}
	Fix $T>0$ and let $s,t \in \bR$ be such that $0<t-s\le T$. Consider a subdivision $\Omega = \{t_0,\ldots,t_L\}$ of the interval $[s,t]$ such that $s=t_0 < t_1 < \ldots < t_L = t$. Consider $a(t,s)$ and $e(\Omega;t,s)$ as defined in \eqref{eq def ats} and \eqref{eq eots} respectively. Then $a(t,s), e(\Omega;t,s) \in \Sjo(\rdd)$ and there exists $C=C(T)>0$ such that
	\begin{equation}\label{eq conv main}
		\| e(\Omega;t,s) - a(t,s)\|_{\Sjo} \le C \omega(\Omega). 
	\end{equation}
As a result, $e(\Omega;t,s) \to a(t,s)$ uniformly in $\rdd$ if $\omega(\Omega)\to 0$ and 
\begin{equation}\label{eq conv op norm} \|E(\Omega;t,s) - U(t,s) \|_{L^2\to L^2} \le  C \omega(\Omega). \end{equation}
\end{theorem}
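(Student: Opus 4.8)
The plan is to derive both conclusions from the single symbol estimate \eqref{eq conv main}. Indeed, \eqref{eq propag} together with \eqref{def Eots}--\eqref{eq eots} gives $E(\Omega;t,s)-U(t,s)=U_0(t,s)\big(e(\Omega;t,s)-a(t,s)\big)^{\w}$; since $U_0(t,s)$ is unitary on $L^2(\rd)$ and Weyl quantization maps $\Sjo(\rdd)$ boundedly into the bounded operators on $L^2(\rd)$ (Sj\"ostrand's theorem), \eqref{eq conv main} yields \eqref{eq conv op norm} at once. Likewise $\Sjo(\rdd)$ embeds continuously into the space of bounded continuous functions on $\rdd$ (each of its elements locally agreeing with the Fourier transform of an $L^1$ function), so $\Sjo$-convergence forces uniform convergence on $\rdd$. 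Finally $a(t,s)\in\Sjo(\rdd)$ is Lemma~\ref{lem stima norma a}, and the membership $e(\Omega;t,s)\in\Sjo(\rdd)$ will drop out of the argument below; so everything reduces to proving \eqref{eq conv main}.

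To this end I would telescope in the Banach algebra $(\Sjo(\rdd),\#)$, exploiting that $a(t,s)$ and $e(\Omega;t,s)$ are $L$-fold twisted products of the same shape. From the evolution law $U(t,s)=U(t_L,t_{L-1})\cdots U(t_1,t_0)$ and the very iteration of the symplectic covariance of Weyl calculus that produced \eqref{eq eots}--\eqref{eq etilde} one obtains the cocycle identity $a(t,s)=\wt{a}(t_L,t_{L-1})\#\cdots\#\wt{a}(t_1,t_0)$ with $\wt{a}(t_{j+1},t_j)\coloneqq a(t_{j+1},t_j)\circ S_{t_j-t_0}$. A standard telescoping identity then expresses $a(t,s)-e(\Omega;t,s)$ as the sum over $k=1,\dots,L$ of
\[
\big(\wt{a}(t_L,t_{L-1})\#\cdots\#\wt{a}(t_{k+1},t_k)\big)\#\big(\wt{a}(t_k,t_{k-1})-\wt{e}(t_k,t_{k-1})\big)\#\big(\wt{e}(t_{k-1},t_{k-2})\#\cdots\#\wt{e}(t_1,t_0)\big),
\]
where empty twisted products are read as the symbol $1$, the unit of $\#$.

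Each summand is controlled by three facts. (i) The middle factor equals $\big(a(t_k,t_{k-1})-e(t_k,t_{k-1})\big)\circ S_{t_{k-1}-t_0}$ and hence is $O\big((t_k-t_{k-1})^2\big)$ in $\Sjo$ by Lemma~\ref{lem est diff a-e}, using that composition with $S_\tau$ for $|\tau|\le T$ is uniformly bounded on $\Sjo(\rdd)$ (the matrices $S_\tau$, $|\tau|\le T$, form a compact subset of $\Sp$). (ii) The twisted product is submultiplicative on $\Sjo(\rdd)$, with some constant $c_0$, and it is invoked only twice per summand. (iii) The head and tail partial products must be bounded in $\Sjo(\rdd)$ uniformly in the number of factors. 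For the head this is immediate from the symplectic covariance of the twisted product, $(f\circ S_\tau)\#(g\circ S_\tau)=(f\#g)\circ S_\tau$, which collapses $\wt{a}(t_L,t_{L-1})\#\cdots\#\wt{a}(t_{k+1},t_k)$ to $a(t_L,t_k)\circ S_{t_k-t_0}$, bounded by Lemma~\ref{lem stima norma a}. The tail $\wt{e}(t_{k-1},t_{k-2})\#\cdots\#\wt{e}(t_1,t_0)$ is precisely $e(\Omega';t_{k-1},t_0)$ for the restricted subdivision $\Omega'=\{t_0,\dots,t_{k-1}\}$, so it remains to bound $\|e(\Omega';\cdot,\cdot)\|_{\Sjo}$ uniformly in $\Omega'$. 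A crude use of submultiplicativity here produces a constant growing like $c_0^{L}$; the remedy is to expand about the $\#$-unit. Writing $\wt{e}(t_{j+1},t_j)=1+\wt{r}_j$ with $\|\wt{r}_j\|_{\Sjo}\le C(T)(t_{j+1}-t_j)$ — because $e(t,s)=\exp\big(-2\pi i\int_s^t b(\tau,s)\,d\tau\big)=1+O(t-s)$ in $\Sjo$ and $1\circ S_\tau=1$ — and using $1\#f=f$, the product $\wt{e}(t_{k-1},t_{k-2})\#\cdots\#\wt{e}(t_1,t_0)$ expands as $1$ plus a sum over increasing tuples of slices, with norm at most $1+\sum_{m\ge1}c_0^{m-1}\big(\sum_j\|\wt{r}_j\|_{\Sjo}\big)^m/m!\le e^{c_0C(T)(t_{k-1}-t_0)}\le e^{c_0C(T)T}$, the factorial gain from the ordering constraint $j_1<\dots<j_m$ being precisely what renders the series summable independently of the number of slices. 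The same expansion also shows $e(\Omega;t,s)\in\Sjo(\rdd)$.

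Combining (i)--(iii) bounds the $k$-th summand by $C(T)(t_k-t_{k-1})^2$, so that
\[
\|e(\Omega;t,s)-a(t,s)\|_{\Sjo}\le C(T)\sum_{k=1}^L(t_k-t_{k-1})^2\le C(T)\,\omega(\Omega)\sum_{k=1}^L(t_k-t_{k-1})=C(T)(t-s)\,\omega(\Omega)\le C(T)\,T\,\omega(\Omega),
\]
which is \eqref{eq conv main}. I expect the genuine obstacle to be point (iii): obtaining a bound on the composed short-time symbols that is stable under refinement of the mesh. Submultiplicativity of $\#$ alone is powerless for this, and the fix is the interplay between expanding each slice about the twisted-product identity $1$ (whence the factorial gain from the ordering constraint) and the symplectic covariance $(f\circ S_\tau)\#(g\circ S_\tau)=(f\#g)\circ S_\tau$, which lets the head partial products be re-expressed through the already-estimated exact symbols $a(t_L,t_k)$.
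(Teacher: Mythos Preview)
Your argument is correct. It differs from the paper's proof in how the difference $e(\Omega;t,s)-a(t,s)$ is organized. The paper follows the classical Fujiwara template: write $\wt{e}(t_{j+1},t_j)=\wt{a}(t_{j+1},t_j)+r(t_{j+1},t_j)$ with $\|r\|_{\Sjo}=O((t_{j+1}-t_j)^2)$, expand the full product $\prod^{\#}(\wt{a}+r)-\prod^{\#}\wt{a}$ into all words in the letters $\wt{a}$ and $r$, collapse each maximal block of consecutive $\wt{a}$'s via the semigroup identity $\wt{a}(t_{j+1},t_j)\#\wt{a}(t_j,t_{j-1})=\wt{a}(t_{j+1},t_{j-1})$, and arrive at the closed bound $C_0\big(\prod_j(1+C_0C_1(t_j-t_{j-1})^2)-1\big)\le C(T)\omega(\Omega)$. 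Your telescoping decomposition instead isolates a \emph{single} error factor $\wt{a}_k-\wt{e}_k$ per summand; the head of $\wt{a}$'s collapses by the same semigroup identity (note that this requires the composition law \eqref{eq a comp law} for $a$ in addition to the covariance identity $(f\circ S_\tau)\#(g\circ S_\tau)=(f\#g)\circ S_\tau$ you quote, since the $\wt{a}$'s carry different base-point shifts), and the genuinely new step is your uniform bound on the tail of $\wt{e}$'s via the expansion $\wt{e}=1+\wt{r}$ with $\|\wt{r}_j\|_{\Sjo}=O(t_{j+1}-t_j)$ and the elementary-symmetric estimate $\sum_{j_1<\cdots<j_m}\prod\|\wt{r}_{j_i}\|\le\frac{1}{m!}\big(\sum_j\|\wt{r}_j\|\big)^m$. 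Both routes hinge on the same inputs (Lemmas \ref{lem unif stima norma}, \ref{lem stima norma a}, \ref{lem est diff a-e} and the $\#$-algebra structure of $\Sjo$) and on the same mechanism---a factorial gain from ordered tuples that defeats the geometric loss from submultiplicativity. Your version is a bit more modular (the uniform bound $\sup_{\Omega}\|e(\Omega;t,s)\|_{\Sjo}<\infty$ is obtained as a clean stand-alone byproduct), while the paper's version packages the combinatorics into a single product formula closer to Fujiwara's original argument.
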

In short, our approach is based on a generalized version of the one pioneered in the aforementioned works by Fujiwara. While it is customary to use arguments of this type for deriving (norm) convergence results for operators, such as \eqref{eq conv op norm}, in view of the structure of generalized metaplectic operators here we focus first on the level of symbols. The effort is repaid by the estimate \eqref{eq conv main}, which also provides a precise rate of convergence with respect to the mesh size of the subdivision, uniformly with respect to $t-s$. This is precisely the case where the advantages of a time slicing approximation that is designed to best fit the features of the problem reflect into stronger results. We emphasize that this issue is particularly relevant in connection with the Trotter product formula: in spite of its widespread use in the path integral literature, it just provides a (qualitative) strong convergence result that can be hardly improved in the unitary setting (i.e., convergence in norm or even at the level of kernels) - see e.g.\ the discussion in \cite[Appendix D]{zagreb}.

We already mentioned the problem of pointwise convergence of integral kernels in time slicing approximations. It is quite natural to consider this problem in the framework offered by generalized metaplectic operators: as a matter of fact, the latter do have a typical explicit form as integral operators - see Section \ref{sec gen met} for further details. Let $\te$ the set of \textit{exceptional times} for $FIO(S_{t-s})$, namely if $\displaystyle S_{t-s} = \begin{bmatrix} A_{t-s} & B_{t-s} \\ C_{t-s} & D_{t-s} \end{bmatrix}$ is the block decomposition of the classical flow, then $\te = \{ \tau \in \bR : \det B_{\tau} =0 \}$. It can be proved that if $U(t,s) \in FIO(S_{t-s})$ and $t-s \in \rte$ then there exist a phase factor $c=c(t-s) \in \bC$, $|c|=1$, and an amplitude function $a'(t,s) = a'(t,s,\cdot) \in \Sjo(\rdd)$ such that
\begin{equation}\label{eq gen met int rep}
	U(t,s)f(x) = \ird u(t,s,x,y)f(y) \, dy,
\end{equation} 
\[  u(t,s,x,y) \coloneqq c(t-s) \lvert\det B_{t-s}\rvert^{-1/2} e^{2\pi i \Phi_{t-s}(x,y)} a'(t,s,x,y), \] where  $\Phi_{t-s}(x,y)$ is a quadratic polynomial whose coefficients depend only on the entries of $S_{t-s}$. Under the same assumptions one can prove that there exists $e'(\Omega;t,s) \in \Sjo(\rdd)$ such that 
\begin{equation}\label{eq Eots int rep}
	E(\Omega;t,s)f(x) = \ird k(\Omega;t,s,x,y) f(y) \, dy,  
\end{equation}
\begin{equation}\label{eq Eots ker}
	k(\Omega;t,s,x,y) \coloneqq c(t-s) \lvert\det B_{t-s}\rvert^{-1/2} e^{2\pi i \Phi_{t-s}(x,y)} e'(\Omega;t,s,x,y).
\end{equation}
The estimates in Theorem \ref{mainthm} can thus be used to infer refined convergence results (precisely, in $(\cF L^1)_{\mathrm{loc}}(\rdd) = (\Sjo)_{\mathrm{loc}}(\rdd)$) at the level of integral kernels. 
\begin{theorem}\label{thm int ker}
	For any $t-s \in (0,T] \setminus \te$, consider the explicit representations of $U(t,s)$ and $E(\Omega;t,s)$ in \eqref{eq gen met int rep} and \eqref{eq Eots int rep} respectively. Under the same assumptions of Theorem \ref{mainthm}, we have:
	\begin{enumerate}
		\item The amplitudes $a'(t,s,\cdot)$ and $e'(\Omega;t,s,\cdot)$ belong to $\Sjo(\rdd)$ and there exists $C_1=C_1(t-s)$ such that 		
		\begin{equation}\label{eq int ampl bound}
			\|e'(\Omega;t,s,\cdot) - a'(t,s,\cdot) \|_{\Sjo} \le C_1 \omega(\Omega).
		\end{equation}
	\item For any real-valued function $\Psi$ on $\rdd$ with compact support there exists $C_2=C_2(t-s,\Psi)>0$ 
	\begin{equation}\label{eq int ker cov bound}
		\| [k(\Omega;t,s,\cdot) - u(t,s,\cdot)] \Psi \|_{\cF L^1} \le C_2 \omega(\Omega).
	\end{equation}
	As a result, we have that $k(\Omega;t,s,\cdot) \to u(t,s,\cdot)$ locally uniformly in $\rdd$. 
	\end{enumerate}
\end{theorem}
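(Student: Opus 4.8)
The plan is to bootstrap everything from the symbol-level estimate \eqref{eq conv main} of Theorem \ref{mainthm} via the structure theory of the classes $FIO(S_{t-s})$ recalled in Section \ref{sec gen met}. For part (1), the key observation is that $U(t,s)$ and $E(\Omega;t,s)$ are built from the \emph{same} homogeneous propagator $U_0(t,s)=c(t-s)\mu(S_{t-s})$, by \eqref{eq propag} and by $E(\Omega;t,s)=U_0(t,s)e(\Omega;t,s)^{\w}$; hence the prefactors $c(t-s)\lvert\det B_{t-s}\rvert^{-1/2}e^{2\pi i\Phi_{t-s}(x,y)}$ in \eqref{eq gen met int rep} and \eqref{eq Eots ker} coincide, and the operator
\[ U(t,s)-E(\Omega;t,s)=U_0(t,s)\bigl(a(t,s)-e(\Omega;t,s)\bigr)^{\w} \]
has integral kernel $c(t-s)\lvert\det B_{t-s}\rvert^{-1/2}e^{2\pi i\Phi_{t-s}(x,y)}\bigl(a'(t,s,x,y)-e'(\Omega;t,s,x,y)\bigr)$. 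For $t-s\in\rte$ the structure theory furnishes a bounded linear map $\cA_{t-s}\colon\Sjo(\rdd)\to\Sjo(\rdd)$, with operator norm $\kappa(t-s)$ depending only on $S_{t-s}$, sending a Weyl symbol $\tau$ to the kernel amplitude of $U_0(t,s)\tau^{\w}$ (a composition of partial Fourier transforms, a linear substitution and chirp multiplications determined by the blocks of $S_{t-s}$; alternatively one invokes the closed graph theorem, using that convergence in $\Sjo$ forces convergence of symbols and of kernels in $\cS'$). By linearity $\cA_{t-s}\bigl(a(t,s)-e(\Omega;t,s)\bigr)=a'(t,s,\cdot)-e'(\Omega;t,s,\cdot)$, so \eqref{eq conv main} gives
\[ \|e'(\Omega;t,s,\cdot)-a'(t,s,\cdot)\|_{\Sjo}\le\kappa(t-s)\,\|a(t,s)-e(\Omega;t,s)\|_{\Sjo}\le\kappa(t-s)\,C\,\omega(\Omega), \]
which is \eqref{eq int ampl bound} with $C_1=C_1(t-s)=\kappa(t-s)\,C$; the memberships $a'(t,s,\cdot),e'(\Omega;t,s,\cdot)\in\Sjo(\rdd)$ are part of the discussion preceding the theorem.

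For part (2) I would subtract the two kernels directly: by \eqref{eq gen met int rep}, \eqref{eq Eots int rep} and \eqref{eq Eots ker},
\[ \bigl[k(\Omega;t,s,\cdot)-u(t,s,\cdot)\bigr]\Psi=c(t-s)\lvert\det B_{t-s}\rvert^{-1/2}\bigl(e^{2\pi i\Phi_{t-s}}\Psi\bigr)\bigl(e'(\Omega;t,s,\cdot)-a'(t,s,\cdot)\bigr). \]
It suffices to treat $\Psi\in C_c^\infty(\rdd)$ (this is all that part (2) is used for, and any $\Psi$ with $e^{2\pi i\Phi_{t-s}}\Psi\in\cF L^1(\rdd)$ is handled identically). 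Then $e^{2\pi i\Phi_{t-s}}\Psi$ is smooth with compact support, hence lies in the Feichtinger algebra $M^1(\rdd)$ with norm controlled by $\Psi$ and $\Phi_{t-s}$; combining the pointwise product relation $M^1(\rdd)\cdot\Sjo(\rdd)\hookrightarrow M^1(\rdd)\hookrightarrow\cF L^1(\rdd)$ (equivalently, the local identification $(\Sjo)_{\loc}=(\cF L^1)_{\loc}$) with part (1) yields
\[ \|[k(\Omega;t,s,\cdot)-u(t,s,\cdot)]\Psi\|_{\cF L^1}\lesssim\lvert\det B_{t-s}\rvert^{-1/2}\|e^{2\pi i\Phi_{t-s}}\Psi\|_{M^1}\,\|e'(\Omega;t,s,\cdot)-a'(t,s,\cdot)\|_{\Sjo}\le C_2\,\omega(\Omega), \]
i.e.\ \eqref{eq int ker cov bound} with $C_2=C_2(t-s,\Psi)$. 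Finally, since $\cF L^1(\rdd)$ embeds continuously into $C_0(\rdd)$ with $\|F\|_{L^\infty}\le\|F\|_{\cF L^1}$, estimate \eqref{eq int ker cov bound} gives a uniform bound on every compact set: choosing $\Psi\in C_c^\infty(\rdd)$ equal to $1$ on a given compact $K\subset\rdd$ produces $\sup_K\lvert k(\Omega;t,s,\cdot)-u(t,s,\cdot)\rvert\le C_2\,\omega(\Omega)\to0$, hence locally uniform convergence.

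I expect the only genuinely delicate point to be the boundedness on $\Sjo(\rdd)$ of the symbol-to-amplitude map $\cA_{t-s}$ used in part (1). One must unwind the explicit form of generalized metaplectic operators from the CGNR framework — writing the kernel of $U_0(t,s)\tau^{\w}$ as a chirp times an amplitude — and check that the hypothesis $\det B_{t-s}\ne0$ is exactly what makes the underlying linear substitution admissible and keeps the operation within $M^{\infty,1}$; this is also the source of the dependence of $C_1$ and $C_2$ on $t-s$ alone, and of their degeneration as $t-s$ approaches the exceptional set $\te$. Everything after that — the module and embedding properties of modulation spaces and the Banach-algebra structure of $\cF L^1$ under pointwise multiplication — is routine.
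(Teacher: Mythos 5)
Your proof is correct and, at the level of ideas, matches the paper's: part (1) hinges on the observation that the symbol-to-amplitude map of Theorem \ref{thm gen met ker} is linear, bounded on $\Sjo(\rdd)$, and depends only on $S_{t-s}$ --- the paper writes it explicitly as $a\mapsto\cU_2\cU\cU_1\bigl(a\circ S_{t-s}^{-1}\bigr)$, so that $\kappa(t-s)$ comes from Lemma \ref{lem unif stima norma} applied to the substitutions by $U_1$, $U_2$, $S_{t-s}^{-1}$ together with the $\Sjo$-invariance of the chirp twist $\cU$ --- and part (2) then reduces the $\cF L^1$ bound to \eqref{eq int ampl bound} by multiplying against the compactly supported chirp. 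The one place you diverge is the mechanism in part (2): the paper inserts an auxiliary cutoff $\Theta\in C_c^\infty(\rdd)$ with $\Theta\equiv 1$ on $\supp\Psi$ so that $\Theta e^{2\pi i\Phi_{t-s}}\in C_c^\infty(\rdd)$, and closes with Young's inequality for $\cF\bigl[\Theta e^{2\pi i\Phi_{t-s}}\bigr]*\cF\bigl[(e'-a')\Psi\bigr]$ combined with the identity $\cF\bigl[(e'-a')\Psi\bigr]=V_\Psi(e'-a')(0,\cdot)$; you instead note directly that $e^{2\pi i\Phi_{t-s}}\Psi\in M^1(\rdd)$ and invoke the module inclusion $M^1\cdot M^{\infty,1}\hookrightarrow M^1\hookrightarrow\cF L^1$. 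Both routes are valid --- yours avoids the auxiliary $\Theta$ at the cost of appealing to the modulation-space multiplication calculus rather than only $L^1$-convolution --- and the final step (locally uniform convergence via $\cF L^1\hookrightarrow C_0\cap L^\infty$ and $\Psi\equiv1$ on a given compact $K$) is verbatim the paper's.
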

This result should be compared with those proved in \cite{NT_CMP20} for the Feynman-Trotter parametrices. They are similar in nature, also for what concerns the proof strategy; a key difference is that here we are able to control the rate of convergence in \eqref{eq int ker cov bound}, thanks to quantitative estimates on convergence of the amplitudes such as \eqref{eq conv main}.  We also remark that it should not be surprising that the previous bounds are not uniform with respect to $t-s$, in contrast to those in Theorem \ref{mainthm}, in view of the possibly degenerate behaviour of the oscillatory integral phase $\Phi_{t-s}$ and ``normalization'' $\lvert\det B_{t-s}\rvert^{-1/2}$ at exceptional times - see the proof of Theorem \ref{thm gen met ker} below for more details. 

As a final remark, we emphasize that the choice $\hbar= 1/2\pi$ is not restrictive as soon as $\hbar$ is interpreted as a small fixed parameter. The more challenging problem of the semiclassical limit, namely convergence also for $\hbar \to 0$, would require the occurrence of \textit{positive} powers of $\hbar$ in bounds such as \eqref{eq conv main}. While this issue would certainly fall within the scope of our analysis, a nice behaviour with respect to $\hbar$ is not to be expected in the context of perturbation theory, which basically involve expansions in \textit{negative} powers of $\hbar$. In view of the lack of semiclassical approximation power for our parametrices, we also preferred to abstain from a careful revision of the basic notions of Gabor analysis in order to keep track of $\hbar$ in a consistent way, see e.g. \cite{CDGN_semi}.

To conclude, we are aware that assuming boundedness of potential perturbations is a quite restrictive condition for the purposes of physics. In fact, this is a problem shared by several mathematical frameworks for the analysis of path integrals; however, thanks to Gabor analysis we were able to significantly relax the regularity assumptions for the potentials. In future work we expect to be able to relax the boundedness assumption too and cover rough potential perturbations with quadratic growth, after a substantial rearrangement of the framework outlined in this contribution. Nevertheless, there is reason to believe that techniques and function spaces of Gabor analysis can be successfully employed to tackle this and other problems of mathematical physics. 

\section{Preliminary results}
\subsection{Notation} We set $|t|^2=t\cdot t$, $t\in\rd$, where $x\cdot y$ is the scalar product on $\rd$. The bracket  $\la  f,g\ra$ denotes the extension to $\cS' (\rd)\times \cS (\rd)$ of the inner product $\la f,g\ra =\int_{\rd} f(t){\overline {g(t)}}dt$ on $L^2(\rd)$. 

We choose the following normalization for the Fourier transform:
\[ \cF(f)(\xi) = \hat{f}(\xi) = \int_{\rd} e^{-2\pi i x\cdot \xi} f(x)dx, \quad \xi \in \rd.
\] 

\subsection{Weyl pseudodifferential operators}\label{sec weyl} 
The usual definition of the Weyl operator $\sigma^\w = \op_{\w}(\sigma)$ with symbol $\sigma:\mathbb{R}^{2d}\rightarrow\mathbb{C}$
is
\begin{equation}\label{eq def wig}
	\sigma^\w f(x)\coloneqq \int_{\mathbb{R}^{2d}}e^{2\pi i(x-y)\cdot \xi}\sigma\left(\frac{x+y}{2},\xi\right)f(y) \, dyd\xi.
\end{equation}
The way to rigorously interpret this (formal) integral operator relies on the function spaces to which the symbol $\sigma$ and the function $f$ belong. For instance, classical symbol classes such as H\"ormander's ones $S_{\rho,\delta}^{m}(\mathbb{R}^{2d})$ \cite{HormanderIII} are usually defined by means of decay/smoothness conditions.

One can also approach the issue from the point of view of time-frequency analysis \cite{Gro_ped}: a straightforward computation (yet formal in general) shows that 
\begin{equation}\label{eq weyl wig}
\la \sigma^\w f, g \ra = \la \sigma, W(g,f) \ra,	
\end{equation} where we have introduced the (cross-)\textit{Wigner transform} of $f,g$:
\begin{equation}\label{eq intro wig}
	W(f,g)(x,\xi) \coloneqq  \ird e^{-2\pi i \xi \cdot y} f \lc x+\frac{y}{2}\rc  \overline{g\lc x-\frac{y}{2}\rc} dy.
\end{equation} 
This is in fact a standard representation in the phase-space formulation of quantum mechanics \cite{QMPS_book}. It turns out that there is an intimate connection between the Gabor transform and the Wigner distribution; for instance, the $L^p$ norm of the (cross-)Wigner transform can be equivalently used to measure the modulation space regularity of a signal \cite{dG_sympmeth11}. 

We are thus lead to interpret \eqref{eq def wig} in the weak sense and consider \eqref{eq weyl wig} as a definition of $\sigma^\w$ for a generalized symbol $\sigma \in \cS'(\rdd)$, for any $f,g \in \cS(\rd)$. Note that this choice paves the way to using modulation spaces both as symbol classes as well as target spaces for Weyl operators; the interested reader can find more details on this perspective in \cite{Gro_book, CR_book}. 

By way of illustration let us mention that the multiplication by a function $V(x)$ is a special example of Weyl operator with symbol \begin{equation} \sigma_V (x,\xi)=(V\otimes 1)(x,\xi)= V(x), \quad (x,\xi)\in \rdd.\end{equation} 
It is not difficult to prove that the correspondence $V\mapsto \sigma_V$ is continuous from $M^{\infty,1}(\rd)$ to $M^{\infty,1}(\rdd)$ - see the next section for other properties of the Sj\"ostrand class. One may similarly prove that a Fourier multiplier with symbol $m(\xi)$ is a Weyl operator with symbol $(1\otimes m)(x,\xi) = m(\xi)$.

The symbolic calculus relies on the composition of Weyl transforms, which provides a bilinear form on symbols known as the \textit{Weyl} (or \textit{twisted}) \textit{product}:
\begin{equation} \sigma^{\mathrm{w}} \circ \rho^{\mathrm{w}} = (\sigma \# \rho)^{\mathrm{w}}, \quad \sigma \# \rho = \cF^{-1}(\hat{\sigma} \natural \hat{\rho}), \end{equation}
where the \textit{twisted convolution} \cite{Gro_book} of $\hat{\sigma}$ and $\hat{\rho}$ is (formally) defined by
\begin{equation}\label{eq def twist conv}
	(\hat{\sigma} \natural \hat{\rho})(x,\xi) \coloneqq  \irdd e^{\pi i (x,\xi) \cdot (\eta,-y)} \hat{\sigma}(y,\eta) \hat{\rho}(x-y,\xi-\eta)\, dy d\eta.
\end{equation} We remark that in phase space quantum mechanics and deformation quantization it is customary to refer to $\sigma\#\rho$ as the \textit{Moyal star product} of $\sigma$ and $\rho$ after \cite{Moyal_49}.

% cf.\ e.g.\ \cite{Estrada:89,Voros_78,QMPS_book}. 

%In any case, this is a fundamental notion in order to establish a non-commutative algebra structure on symbol spaces; as far as modulation spaces are concerned, the problem has been studied in several papers (cf.\ for instance \cite{CToftWahl_weylprod_14,Gro_weylprod_06,HolstToftWahl_weylprod_07}). 

\subsection{Some results on the Sj\"ostrand class} \label{sec sjo}
We already introduced the Sj\"ostrand class in Section \ref{sec intro}. As the name suggests, it was first presented by Sj\"ostrand in \cite{Sjo_94} as an exotic class of non-smooth symbols still giving bounded pseudodifferential operators in $L^2(\rd)$. It was later rediscovered in Gabor analysis by Gr\"ochenig \cite{Gro_sjo,GroRze}, who obtained novel proofs of known results but also a number of new important characterizations (e.g., almost diagonalization in phase space of the corresponding Weyl operators and the Wiener property). 

We recall that, as a fully fledged modulation space, $\Sjo(\rd)$ can be designed the collection of $u \in \cS'(\rd)$ such that, for some (hence any) non-trivial $g \in \cS(\rd)$,
\[ \| u \|_{\Sjo} \coloneqq \ird \sup_{x \in \rdd} |V_g u (x,\xi)| \, d\xi < \infty. \]
We list here some properties that will be used below. The reader can consult the aforementioned papers for a more comprehensive account. 
\begin{proposition}\label{prop sjo prop} 
	\begin{enumerate}[label=(\roman*)]
		\item $M^{\infty,1}(\rd) \subset (\cF L^1(\rd))_{\loc}\cap L^{\infty}(\rd) \subset C_{\mathrm{b}} (\rd)$, where $C_{\mathrm{b}} (\rd)$ is the space of bounded and continuous functions $\rd \to \bC$.  
		\item $(M^{\infty,1}(\rd))_{\loc} = (\cF L^1(\rd))_{\loc}$.
		\item $\cF \cM(\rd) \subset M^{\infty,1}(\rd)$, where $\cM(\rd)$ is the space of complex finite measures on $\rd$.
		\item $\Sjo(\rd)$ is a Banach algebra under pointwise multiplication:
		\[ f,g \in \Sjo(\rd) \, \Rightarrow \, fg \in \Sjo(\rd). \]
		\item $\Sjo(\rd)$ is a Banach algebra under the Weyl product of symbols:
		\[ \rho,\sigma \in \Sjo(\rd) \, \Rightarrow \, \rho \# \sigma \in \Sjo(\rd). \] 
		\item If $\sigma \in \Sjo(\rdd)$ then $\sigma^\w$ is a bounded operator on $L^2(\rd)$ and, in general, on $M^p(\rd)$ for $1\le p \le \infty$. To be precise, there exists $C>0$ independent of $\sigma$ such that \begin{equation}\label{eq cont sjo}\| \sigma^\w\|_{M^p\to M^p} \le C \|\sigma\|_{\Sjo}. \end{equation}
	\end{enumerate}
\end{proposition}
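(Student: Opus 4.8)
The plan is to treat the six items as a curated list of by-now classical facts about the Sj\"ostrand class $\Sjo(\rd)=M^{\infty,1}(\rd)$, organised throughout around the short-time Fourier transform; the two genuinely deep inputs --- the Wiener amalgam description of $\Sjo$ and Gr\"ochenig's almost diagonalisation of Weyl operators with symbols in $\Sjo$ --- I would import from \cite{Sjo_94,Gro_sjo,GroRze,Gro_book,CR_book,BenyiOko_book}. I fix a Gaussian window $g\in\cS(\rd)\smo$, so that $\hat g$ is an even Schwartz function (in particular $\hat g\in L^1(\rd)$) and $g^2$ is again an admissible window, and I use the formula $V_g u(x,\xi)=\cF\bigl(u\cdot\overline{g(\cdot-x)}\bigr)(\xi)$ recorded in Section \ref{sec intro}, the independence of $\|\cdot\|_{\Sjo}$ of the admissible window, the identity $\|u\|_{\Sjo}=\ird\sup_{x\in\rd}|V_g u(x,\xi)|\,d\xi$, and the standard reconstruction formula $u=\|g\|_{L^2}^{-2}\irdd V_g u(z)\,\pi(z)g\,dz$.

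For the ``soft'' items I would argue directly. Evaluating the reconstruction formula at a point and bounding the inner $dy$-integral of $|V_g u(y,\xi)||g(x-y)|$ by $\bigl(\sup_y|V_g u(y,\xi)|\bigr)\|g\|_{L^1}$ exhibits $u$ as a bounded function with $\|u\|_{L^\infty}\le(\|g\|_{L^1}/\|g\|_{L^2}^2)\|u\|_{\Sjo}$; together with the local embedding $\Sjo(\rd)\subset(\cF L^1(\rd))_{\loc}$ --- part of the amalgam identification $\Sjo(\rd)=W(\cF L^1,\ell^\infty)(\rd)$, see \cite{Gro_book,BenyiOko_book} --- and the trivial $(\cF L^1(\rd))_{\loc}\subset C(\rd)$, this yields (i). For (ii) it then remains to prove $(\cF L^1(\rd))_{\loc}\subset(\Sjo(\rd))_{\loc}$, for which it suffices to establish $\cF L^1(\rd)\subset\Sjo(\rd)$; this inclusion, and (iii), both follow from the elementary bound $\sup_x|V_g(\cF\mu)(x,\xi)|\le\int_{\rd}|\hat g(\xi+\eta)|\,d|\mu|(\eta)$ for $\mu\in\cM(\rd)$ --- valid because replacing $g$ by $T_x g$ only inserts a unimodular factor --- which, upon integrating in $\xi$ and applying Tonelli's theorem, gives $\|\cF\mu\|_{\Sjo}\le\|\hat g\|_{L^1}|\mu|(\rd)$ (this is \cite[Proposition 3.4]{NT_CMP20}) and, specialising to $d\mu=\hat h(-\eta)\,d\eta$, gives $\|h\|_{\Sjo}\le\|\hat g\|_{L^1}\|\hat h\|_{L^1}$ for $h\in\cF L^1(\rd)$.

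The pointwise-multiplication algebra property (iv) runs on the identity $V_{g^2}(f_1f_2)(x,\cdot)=V_g f_1(x,\cdot)*V_g f_2(x,\cdot)$, which is immediate from $V_g u(x,\cdot)=\cF\bigl(u\cdot\overline{g(\cdot-x)}\bigr)$ and the fact that $\cF$ turns products into convolutions: taking $\sup_x$ (which dominates a product of suprema by the product of the suprema), then the $L^1$-norm in $\xi$, then Young's convolution inequality, gives $\|f_1f_2\|_{\Sjo}\lesssim\|f_1\|_{\Sjo}\|f_2\|_{\Sjo}$ up to the window-change constant. The Weyl-product algebra property (v) is part of Sj\"ostrand's original results \cite{Sjo_94}; the cleanest route is Gr\"ochenig's characterisation, by which a symbol lies in $\Sjo(\rdd)$ exactly when the Gabor matrix of its Weyl transform is dominated, with comparable norms, by a sequence in $\ell^1$: since $\ell^1$ is a convolution algebra, the product of two such matrices is again of this type, and since $\sigma^\w\circ\rho^\w=(\sigma\#\rho)^\w$ this forces $\sigma\#\rho\in\Sjo(\rdd)$ together with $\|\sigma\#\rho\|_{\Sjo}\lesssim\|\sigma\|_{\Sjo}\|\rho\|_{\Sjo}$.

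Finally, for (vi) the case $p=2$ (i.e.\ $M^2=L^2$) is Sj\"ostrand's boundedness theorem \cite{Sjo_94}, and the uniform bound \eqref{eq cont sjo} on every $M^p(\rd)$ follows from the same almost diagonalisation \cite{Gro_sjo,Gro_book,CR_book}: fixing a Gabor frame $\{\pi(\lambda)g\}_{\lambda\in\Lambda}$ of $L^2(\rd)$ --- which simultaneously provides the discrete norm $\|f\|_{M^p}\asymp\|(\la f,\pi(\lambda)g\ra)_{\lambda\in\Lambda}\|_{\ell^p}$ --- one has $|\la\sigma^\w\pi(\mu)g,\pi(\lambda)g\ra|\le H(\lambda-\mu)$ for some $H$ with $\|H\|_{\ell^1(\Lambda)}\le C\|\sigma\|_{\Sjo}$, so the Gabor matrix of $\sigma^\w$ is convolution-dominated by $H\in\ell^1$, hence bounded on every $\ell^p(\Lambda)$ with norm at most $\|H\|_{\ell^1}$ by Schur's test, and transporting this through the analysis and synthesis operators of the frame yields $\|\sigma^\w\|_{M^p\to M^p}\le C\|\sigma\|_{\Sjo}$. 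The genuinely non-elementary ingredient --- and the point at which it is essential that the symbol belongs to $\Sjo$ and not merely to some larger modulation space --- is precisely this almost-diagonalisation estimate, which underlies both (v) and (vi); in a fully self-contained treatment it would be the main obstacle, whereas here I would import it and spend the actual writing only on fixing the time-frequency and Fourier normalisations carefully, so that the quantitative constants carry over cleanly to the estimates of Theorem \ref{mainthm}.
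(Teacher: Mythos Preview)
Your proposal is correct. The paper's own proof of this proposition is purely bibliographical --- one citation per item --- so your treatment is strictly more informative: you give self-contained arguments for (i)--(iv) and invoke the same deep inputs (Sj\"ostrand \cite{Sjo_94}, Gr\"ochenig's almost diagonalisation \cite{Gro_sjo,Gro_book}) for (v) and (vi). The only notable divergence is item (iv): the paper defers to the general multiplication-algebra characterisation in \cite{ReichSickel_algmod16}, whereas your direct argument via $V_{g^2}(f_1f_2)(x,\cdot)=V_g f_1(x,\cdot)*V_g f_2(x,\cdot)$ and Young's inequality is more elementary and tailored exactly to $\Sjo$; the cited reference buys generality (weighted $M^{p,q}$ spaces), yours buys transparency. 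For (v) and (vi) there is no substantive difference between the two routes.
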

\begin{proof}
	\begin{enumerate}[label=(\roman*)]
		\item It is a direct consequence of the definition. 
		\item See \cite[Proposition 2.9]{BenyiOko_book} for a proof.
		\item A proof can be found in \cite[Proposition 3.4]{NT_CMP20}.
		\item This is a special case of a more general characterization, see \cite[Theorem 3.5 and Corollary 2.10]{ReichSickel_algmod16}.
		\item Proofs can be found in the original paper \cite{Sjo_94} by Sj\"ostrand as well as in the already mentioned paper \cite{Gro_sjo} by Gr\"ochenig. 
		\item The same comments of the previous item apply here. For a streamlined textbook proof see also \cite[Theorem 14.5.2]{Gro_book}. \qedhere
	\end{enumerate}
\end{proof}

\begin{remark}\label{rem algebra}
%	To unambiguously fix the notation: whenever concerned with a product of elements $a_1,\ldots,a_N$ in a Banach algebra $(A,\star)$, we write \begin{equation}
%		\prod_{k=1}^N a_k \coloneqq  a_1 \star a_2 \star \ldots \star a_N.
%	\end{equation}
%	This relation is meant to hold even when $(A,\star)$ is a non-commutative algebra, provided that the symbol on the left-hand side exactly designates the ordered product on the right-hand side.
	
Let $(A,\star)$ be a unital Banach algebra; in particular, there exists $C>0$ such that 
	\[ \|a_1 \star a_2\|_A \le C \|a_1 \|_A \|a_2\|_A, \quad a_1,a_2 \in A. \] Recall that it is always possible to introduce an equivalent norm on $A$ for which the product estimate above holds with $C=1$ and the unit has norm equal to $1$ (cf.\ e.g. \cite[Theorem 10.2]{RudinFA}). From now on, we assume to work with such equivalent norm whenever concerned with a Banach algebra. 
\end{remark}

It was already noted in \cite{Sjo_94} that $C^{\infty}_{\mathrm{c}}(\rd)$ is not dense in $M^{\infty,1}(\rd)$ with the norm topology, as a consequence of the particular strength of the latter. It is thus natural to weaken the notion of convergence and introduce the following notion \cite{CNR_rough15,Sjo_94}.

\begin{definition}\label{def nar conv} 
	Let $\Omega$ be a subset of some Euclidean space. The map $\Omega \ni \nu \mapsto \sigma_\nu \in M^{\infty,1}(\rd)$ is said to be continuous for the narrow convergence if: 
	\begin{enumerate}[label=(\roman*)]
		\item it is a continuous map in $\cS'(\rd)$ (weakly), and
		\item there exists a function $h \in L^1(\rd)$ such that for some (hence any) window $g\in \cS(\rd)$ one has $\sup_{x\in\rd}| V_g\sigma_\nu(x,\xi)| \le h(\xi)$ for any $\nu \in \Omega$ and a.e. $\xi \in \rd$.
	\end{enumerate}
\end{definition}
It turns out that $\cS(\rd)$ is dense in $M^{\infty,1}(\rd)$ with respect to the narrow convergence \cite{Sjo_94}. The notion of narrow convergence happens to be useful for our purposes as it implies the uniform bound $\sup_{\nu \in \Omega} \| \sigma_\nu\|_{\Sjo} \le \|h \|_{L^1} < \infty$.

Finally, we recall from \cite[Lemma 2.2]{CGNR_genmet14} and \cite[Lemma 3.1]{NT_CMP20} a result on uniform estimates for linear changes of variable in the Sj\"ostrand class. In fact, here we consider a slightly more general version.
\begin{lemma}\label{lem unif stima norma}
	Consider $\sigma\in  M^{\infty,1}(\rdd)$
	and a correspondence $\bR \ni t\mapsto M_t\in \GLL$. For any $t \in \bR$ we have $\sigma \circ M_t\in \Sjo(\rdd)$; precisely, for any fixed $\Phi \in \cS(\rdd) \smo$,
	\begin{equation}\label{eq est linear comp prec}
		\| \sigma\circ M_t \|_{\Sjo}\le C \lvert \det M_t \rvert^{-1} \|V_{\Phi \circ M_t^{-1}} \Phi \|_{L^1} \|\sigma\|_{\Sjo}, 
	\end{equation} for some constant $C=C(\Phi)>0$. 
	As a result, if $t\mapsto M_t$ is continuous, for any $T>0$ there exists $C(T)>0$ such that
	\begin{equation}\label{eq est linear comp}
		\| \sigma\circ M_t \|_{\Sjo}\le C(T)  \|\sigma\|_{\Sjo}, \quad t \in [-T,T].
	\end{equation}
\end{lemma}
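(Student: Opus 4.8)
The plan is to combine the covariance of the short-time Fourier transform under linear changes of variable with a change-of-window estimate. \emph{Step 1 (covariance).} For $u\in\cS'(\rdd)$, a window $\Phi\in\cS(\rdd)\smo$ and $M\in\GLL$, a direct computation (the change of variable $w=My$ in the integral defining $V_\Phi u$) yields
\[
V_\Phi(u\circ M)(x,\xi)=|\det M|^{-1}\,V_{\Phi\circ M^{-1}}(u)\big(\chi_M(x,\xi)\big),
\]
where $\chi_M(x,\xi)=(Mx,(M^{\top})^{-1}\xi)$ is a linear change of variable on $\rdd\times\rdd$ with $|\det\chi_M|=1$. Thus pulling back by $M$ amounts, up to the Jacobian $|\det M|^{-1}$, to a measure-preserving substitution in phase space together with the window change $\Phi\rightsquigarrow\Phi\circ M^{-1}$. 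Inserting this into $\|\sigma\circ M\|_{\Sjo}=\int_{\rdd}\sup_{x}|V_\Phi(\sigma\circ M)(x,\xi)|\,d\xi$ and using that $\chi_M$ respects the $L^\infty_x L^1_\xi$ structure (the supremum over $x$ is unchanged by the bijection $x\mapsto Mx$, and the $\xi$-substitution contributes a Jacobian $|\det M|$ cancelling part of the factor above), one reduces matters to controlling the $M^{\infty,1}$-quantity of $\sigma$ computed with the non-standard window $\Phi\circ M^{-1}$.

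\emph{Step 2 (change of window).} To pass from the window $\Phi\circ M^{-1}$ back to $\Phi$, I would invoke the classical pointwise change-of-window inequality for the STFT, in the form
\[
|V_{\Phi\circ M^{-1}}\sigma(z)|\le \|\Phi\|_{L^2}^{-2}\,\big(|V_\Phi\sigma|\ast|V_{\Phi\circ M^{-1}}\Phi|\big)(z),\qquad z\in\rdd\times\rdd.
\]
Taking the supremum in the first $\rdd$-variable and integrating in the second, and applying the Young inequality for the mixed norm $L^\infty_x L^1_\xi$ (namely $\|F\ast G\|_{L^\infty_x L^1_\xi}\le\|F\|_{L^\infty_x L^1_\xi}\|G\|_{L^1}$), one obtains a bound of the form $C\,|\det M|^{-1}\|V_{\Phi\circ M^{-1}}\Phi\|_{L^1}\|\sigma\|_{\Sjo}$, i.e.\ \eqref{eq est linear comp prec}; the determinant is carried along through the normalizations, exactly as in \cite[Lemma 2.2]{CGNR_genmet14} and \cite[Lemma 3.1]{NT_CMP20}. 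Since $\Phi\in\cS(\rdd)$ we have $\Phi\circ M^{-1}\in\cS(\rdd)$ and hence $V_{\Phi\circ M^{-1}}\Phi\in\cS(\rdd\times\rdd)\subset L^1$, so the right-hand side is finite and in particular $\sigma\circ M\in\Sjo(\rdd)$.

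\emph{Step 3 (uniformity).} For \eqref{eq est linear comp} it remains to bound the two $t$-dependent factors of \eqref{eq est linear comp prec} uniformly on $[-T,T]$. Since $t\mapsto M_t$ is continuous and valued in $\GLL$, the sets $\{M_t:t\in[-T,T]\}$ and $\{M_t^{-1}:t\in[-T,T]\}$ are compact subsets of $\GLL$; hence $t\mapsto|\det M_t|^{-1}$ is continuous, and therefore bounded, on $[-T,T]$, and all Schwartz seminorms of $\Phi\circ M_t^{-1}$ are bounded uniformly in $t$ (the entries of $M_t^{-1}$ stay in a compact set and $|\det M_t^{-1}|$ is bounded away from $0$). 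Because the STFT maps $\cS\times\cS$ continuously into $\cS$, the family $\{V_{\Phi\circ M_t^{-1}}\Phi:t\in[-T,T]\}$ is bounded in $\cS(\rdd\times\rdd)$, so in particular $\sup_{t\in[-T,T]}\|V_{\Phi\circ M_t^{-1}}\Phi\|_{L^1}<\infty$. Combining this with the bound on $|\det M_t|^{-1}$ gives the constant $C(T)$.

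\emph{Main obstacle.} The only genuinely delicate bookkeeping is in Step~2: one must make sure that the supremum in $x$ can be brought inside the convolution and that the mixed-norm Young inequality is applied with the correct exponents, so that the auxiliary-window contribution emerges precisely as the $L^1$-norm appearing in \eqref{eq est linear comp prec}. Everything else is a routine change of variables together with soft facts about Schwartz functions and compactness; one should only be mildly careful to use the covariance identity and the change-of-window inequality with the normalizations of $V_\Phi$ and $\|\cdot\|_{\Sjo}$ fixed in Sections~\ref{sec intro}--\ref{sec weyl}, since a different convention merely relocates the $|\det M_t|^{-1}$ factor.
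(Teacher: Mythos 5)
Your approach — covariance of the STFT under $u\mapsto u\circ M$, followed by the change-of-window inequality and a mixed-norm Young inequality, then a compactness argument for uniformity — is exactly the standard route and coincides with what the cited sources do; the paper itself does not give a self-contained proof and simply refers to \cite[Lemma 2.2]{CGNR_genmet14} and \cite[Lemma 3.1]{NT_CMP20}. Steps 1, 2 and 3 are each individually correct, and Step 3 in particular (compactness of $\{M_t^{-1}:t\in[-T,T]\}$ in $\GLL$, hence uniform bounds on the Schwartz seminorms of $\Phi\circ M_t^{-1}$ and continuity of $V\colon\cS\times\cS\to\cS$) is exactly the soft argument one wants for \eqref{eq est linear comp}.

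There is, however, an internal bookkeeping inconsistency that you should resolve. In Step~1 you correctly note that the substitution $\zeta'=M^{-\top}\zeta$ in the $\xi$-integral contributes a Jacobian $|\det M^\top|=|\det M|$; combined with the prefactor $|\det M|^{-1}$ from the covariance identity, these cancel \emph{completely} (not ``partly'' as you write), so Step~1 reduces exactly to
\[
\|\sigma\circ M\|_{\Sjo,\Phi}\;=\;\|\sigma\|_{\Sjo,\,\Phi\circ M^{-1}},
\]
with no residual determinant. Step~2 then yields
\[
\|\sigma\circ M\|_{\Sjo,\Phi}\;\le\;\|\Phi\|_{L^2}^{-2}\,\|V_{\Phi\circ M^{-1}}\Phi\|_{L^1}\,\|\sigma\|_{\Sjo,\Phi},
\]
so the $|\det M|^{-1}$ in the final display of your Step~2 is not actually produced by your computation; you have inserted it only to match the shape of \eqref{eq est linear comp prec}. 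This is worth flagging: your argument by itself does not establish the lemma in the precise form stated when $|\det M_t|>1$ (the right-hand side of \eqref{eq est linear comp prec} would then be strictly smaller than the bound you prove), whereas for $|\det M_t|\le 1$ your bound is stronger. The $|\det M|^{-1}$ in the paper's formulation is carried over from the weighted-modulation-space version in \cite{CGNR_genmet14}, where such a factor genuinely appears for general exponents $(p,q)$, but in the unweighted $M^{\infty,1}$ case it cancels exactly as your computation shows. None of this affects the uniform estimate \eqref{eq est linear comp} — which is what the paper actually uses, since on $[-T,T]$ both $|\det M_t|^{\pm 1}$ and $\|V_{\Phi\circ M_t^{-1}}\Phi\|_{L^1}$ are bounded — but you should either state your bound without the determinant factor, or explain precisely why the extra factor is harmless in the range of interest.
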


\subsection{Metaplectic operators and quadratic Hamiltonian}\label{sec met}
There are several equivalent ways to define metaplectic operators. The reader is referred to the monographs \cite{dG_sympmeth11,Folland_book,Taylor_nch} for comprehensive treatments of the topic. 

Here we confine ourselves to recall that the metaplectic group coincides with the two-fold covering of the symplectic group $\Sp$. There exists a faithful and strongly continuous unitary representation of $\Mp$ on $L^2(\rd)$, called the Shale-Weil \textit{metaplectic representation}, which allows us to recast the issue in terms of a correspondence between any symplectic matrix $S\in\Sp$ and a pair of unitary \textit{metaplectic operators} differing only by the sign, both denoted by $\mu(S)$ with a slight abuse of notation.

Among the several properties satisfied by metaplectic operators, it is important to our purposes to highlight that the Weyl quantization satisfies a characterizing intertwining relationship called \textit{symplectic covariance} (cf.\ \cite[Theorem 215]{dG_sympmeth11}): for every symbol $\sigma \in \cS'(\rdd)$ we have
\begin{equation}\label{eq symp cov weyl}
			(\sigma \circ S)^\w = \mu(S)^{-1} \sigma^\w \mu(S). 
\end{equation}
Moreover, metaplectic operators can be well characterized in terms of their action on phase space thanks to the analysis of the corresponding \textit{Gabor matrix} $G_{S}(z,w)\coloneqq \la \mu(S)\pi(z)g,\pi(w)\gamma \ra$, $z,w \in \rdd$, for fixed $g,\gamma \in \cS(\rd)$. In particular, one can prove rapid decay estimates for $|G_S(z,w)|$ away from the graph of $S$ in $\rdd$ - this can be viewed in a sense as a manifestation of the correspondence principle of quantum mechanics. The reader is invited to consult the monograph \cite{CR_book} for an in-depth study of the issue. 

While boundedness results on modulation spaces for metaplectic operators can be readily derived from phase-space estimates such as those for the corresponding Gabor matrix, we need below a more precise bound where the dependence on $S$ of the underlying constants is completely clarified. A result of this type was proved in \cite[Corollary 3.4]{CNT_dispdiag20} and reads as follows: for every $S \in \Sp$ and $1\le p \le \infty$ there exists an absolute constant $C>0$ such that, for every $f \in M^p(\rd)$,
\begin{equation}\label{eq bound met op}
	\| \mu(S)f \|_{M^p} \le C (\sigma_1(S) \cdots \sigma_d(S))^{|1/2-1/p|} \|f \|_{M^p},
\end{equation} where $\sigma_1(S) \ge \ldots \ge \sigma_d(S) \ge 1$ are the $d$ largest singular values of $S$. 

A concrete characterization of metaplectic operators can be given in terms of Schr\"odinger propagators with quadratic Hamiltonian \cite{RobbinSalamon}. Consider the Cauchy problem for the Schr\"odinger equation 
\begin{equation}\label{}
	\begin{cases} i \partial_t \psi = 2\pi H_0 \psi \\ \psi(s,x) = f(x),
	\end{cases}
\end{equation} 
where $H_{0}=Q^{\w}$ is the Weyl quantization a real-valued, time-independent, quadratic homogeneous polynomial $Q$ on $\rdd$. Precisely, if 
\begin{equation} Q (x,\xi) = \frac{1}{2} A \xi \cdot \xi + Bx \cdot \xi + \frac{1}{2} Cx \cdot x,
\end{equation} for some matrices $A,B,C \in \mrd$ with $A=A^\top$ and $C=C^\top$, then
\[ H_0 = Q^\w = -\frac{1}{8\pi^2} \sum_{j,k=1}^d A_{j,k}\partial_j\partial_k -\frac{i}{2\pi}\sum_{j,k=1}^d B_{j,k}x_j\partial_k  + \frac{1}{2}\sum_{j,k=1}^d C_{j,k}x_j x_k- \frac{i}{4\pi}\mathrm{Tr}(B). \]

%It is known that $H_{0}= Q^{\w}$ is a self-adjoint operator on the maximal domain (see \cite{Hormander_mehler95})
%\[ D\left(H_{0}\right)=\{ f \in L^{2}(\mathbb{R}^{d})\,:\,H_{0} f \in L^{2}(\mathbb{R}^{d})\} .\] 

\noindent It is well known that the associated propagator is a metaplectic operator, precisely 
\begin{equation}
	U_0(t,s) = e^{-2\pi i (t-s) Q^\w} = c(t-s) \mu(S_{t-s}), 
\end{equation} for some $ c=c(t-s)\in \bC$ with $|c|=1$, where the (continuous) mapping 
\begin{equation}\label{eq block St}
	\bR \ni \tau \mapsto S_{\tau} = \begin{bmatrix} A_{\tau} & B_{\tau} \\ C_{\tau} & D_{\tau} \end{bmatrix} \in \Sp
\end{equation} is the phase-space flow determined by the Hamilton equations for the corresponding classical model with Hamiltonian $Q(x,\xi)$.

Another classical result concerns the explicit characterization of the propagator $U_0(t,s)$ as an integral operator, cf.\ e.g.\ \cite[Theorems 4.51 and 4.53]{Folland_book} and \cite{dG_sympmeth11}. Recall that the matrix $S_{\tau}$ is said to be \textit{free} if the upper-right block $B_{\tau}$ in \eqref{eq block St} is invertible. In this case, we have
		\begin{equation} \label{met int formula} U_0(t,s)f(x)= c(t-s) \lvert\det B_{t-s}\rvert^{-1/2} \int_{\rd} e^{2\pi i \Phi_{t-s}(x,y)} f(y) dy, \qquad f \in \cS(\rd), \end{equation} again for some $c=c(t-s) \in \bC$ with $|c|=1$, where we introduced the quadratic form (also known as the \textit{generating function} of $S_{t-s}$)
		\begin{equation}\label{eq phit}
			\Phi_{t-s} (x,y) = \frac{1}{2}D_{t-s} B_{t-s}^{-1}x\cdot x - B_{t-s}^{-1} x \cdot y + \frac{1}{2} B_{t-s}^{-1}A_{t-s} y\cdot y.
		\end{equation} 
	
For future reference we define the set of \textit{exceptional times} as
		\begin{equation}\label{eq def te} \mathfrak{E} = \{ \tau \in \bR \,:\, \det B_\tau =0 \},  \end{equation}
		namely as the collection of values of $\tau$ such that $S_\tau$ is \textit{not} a free symplectic matrix. Some of the properties of this set can be immediately inferred from the fact that it actually coincides with the zero set of an analytic function: apart from the case $\mathfrak{E}=\bR$ (which trivially occurs when $H_0=0$), $\mathfrak{E}$ is a discrete (hence at most countable) subset of $\bR$ which always includes $\tau=0$ and possibly only this value (this is the case of the free particle).
\subsection{Generalized metaplectic operators}\label{sec gen met}
Let us consider now the perturbed problem in \eqref{eq schro base} with the assumptions stated in Section \ref{sec intro}. The arguments of perturbation theory already anticipated there can be used to give a rigorous proof of the following facts (cf.\ \cite[Theorem 4.1]{CGNR_genmet14} for the details): the problem under consideration is globally backward and forward well-posed in $L^2(\rd)$ and the corresponding full Schr\"odinger propagator $U(t,s)$ is a one-parameter strongly continuous group of automorphisms of $L^2(\rd)$ - in fact, of any modulation space $M^p(\rd)$, $1 \le p \le \infty$, hence the phase-space concentration is preserved under the evolution. The structure of $U(t,s)$ can be obtained as a perturbation of the propagator of the associated homogeneous problem $U_0(t,s) = c(t-s)\mu(S_{t-s})$, precisely 
\[ U(t,s) = U_0(t,s)a(t,s)^\w, \] where the symbol $a(t,s) \in \Sjo(\rdd)$ (cf. Lemma \ref{lem stima norma a} below) is the one defined in \eqref{eq def ats}. 

Let us emphasize that that $U(t,s)$ is not a unitary propagator in general, unless $V$ is self-adjoint - equivalently, if $\sigma_t$ is a real-valued symbol. 

Operators of the form of $U(t,s)$ as above, namely arising as combinations of metaplectic and pseudodifferential operators with symbols in $\Sjo(\rdd)$, have been thoroughly studied in time-frequency analysis: they constitute the family $FIO(S_{t-s})$ of \textit{generalized metaplectic operators} introduced in \cite{CGNR_wienerFIO_13,CGNR_genmet14}. Roughly speaking, this can be thought of as the largest class of ``perturbations'' of $\mu(S_{t-s})$ that still evolve Gabor wave packets in phase space essentially following the classical flow $S_{t-s}$ - a precise condition can be given in terms of the Gabor matrix decay. It is clear that if metaplectic operators are used as building blocks for a rigorous theory of path integrals (as proposed for instance in \cite{dG_principles17,RobbinSalamon}), generalized metaplectic operators designed as before naturally provide a rigorous framework to handle perturbation expansions such as the Born approximation (cf.\ e.g.\ \cite[Chapter 6]{FeynmanHibbs_10}). 

In view of this discussion, the following properties of generalized metaplectic operators should not be surprising. 

\begin{theorem}\label{thm gen met properties} Let $S,S_1,S_2 \in \Sp$.
	\begin{enumerate}[label=(\roman*)]
	\item Let $T:\cS(\rd)\to \cS'(\rd)$ be a linear continuous operator. $T \in FIO(S)$ if and only if there exist $\sigma_1,\sigma_2 \in M^{\infty,1}(\rdd)$ such that 
		\begin{equation} T = \sigma_1^{\w} \muS = \muS  \sigma_2^{\w}. \end{equation} In particular, $\sigma_2 = \sigma_1 \circ S$. 
	\item An operator $T \in FIO(S)$ is bounded on $M^p(\rd)$ for any $1\le p \le \infty$.
	\item If $T_1 \in FIO(S_1)$ and $T_2 \in FIO(S_2)$, then $T_1T_2 \in FIO(S_1S_2)$.
	\item If $T \in FIO(S)$ is invertible on $L^2(\rd)$ then $T^{-1} \in FIO(S^{-1})$. 
	\end{enumerate}
\end{theorem}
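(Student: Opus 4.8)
The plan is to establish (i) first — the structural characterization is the substantive claim — and then to deduce (ii)--(iv) by combining it with the algebraic properties of the Sj\"ostrand class from Proposition \ref{prop sjo prop}, the change-of-variable bound of Lemma \ref{lem unif stima norma}, the symplectic covariance \eqref{eq symp cov weyl}, the metaplectic estimate \eqref{eq bound met op}, and the Wiener (spectral invariance) property of $M^{\infty,1}(\rdd)$ recalled in Section \ref{sec sjo}.

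For (i), I would start from the definition of $FIO(S)$ in terms of the decay of the Gabor matrix $|\la T\pi(z)g,\pi(w)g\ra|$ off the graph of $S$, dominated by an $L^1(\rdd)$ envelope. Since any continuous linear $T\colon\cS(\rd)\to\cS'(\rd)$ has a unique Weyl symbol in $\cS'(\rdd)$, I set $\sigma_1$ to be the Weyl symbol of $R\coloneqq T\muS\inv$; the task is to show $\sigma_1\in M^{\infty,1}(\rdd)$. Using the intertwining $\muS\pi(z)\muS\inv=\pi(Sz)$ (up to a unimodular phase) together with a standard change-of-window convolution estimate for Gabor matrices, one transfers the concentration of the Gabor matrix of $T$ near the graph of $S$ into concentration of the Gabor matrix of $R$ near the diagonal, still with an $L^1$ envelope; the almost-diagonalization characterization of $M^{\infty,1}(\rdd)$ then gives $\sigma_1\in M^{\infty,1}(\rdd)$. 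Conversely, if $T=\sigma_1^\w\muS$ with $\sigma_1\in M^{\infty,1}(\rdd)$, the Gabor matrix of $T$ factors, by associativity of the pairing, as the ``composition'' of the almost-diagonal Gabor matrix of $\sigma_1^\w$ and the graph-of-$S$ concentrated Gabor matrix of $\muS$; convolving the two $L^1$ envelopes produces an $L^1$ envelope concentrated along the graph of $S$, i.e.\ $T\in FIO(S)$. The ``in particular'' statement is then immediate from \eqref{eq symp cov weyl}: $T=\muS(\muS\inv\sigma_1^\w\muS)=\muS(\sigma_1\circ S)^\w$, and $\sigma_1\circ S\in M^{\infty,1}(\rdd)$ by Lemma \ref{lem unif stima norma}, so $\sigma_2=\sigma_1\circ S$ works.

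Granting (i), part (ii) is immediate: writing $T=\sigma_1^\w\muS$ with $\sigma_1\in M^{\infty,1}(\rdd)$, Proposition \ref{prop sjo prop}(vi) (namely \eqref{eq cont sjo}) and the metaplectic bound \eqref{eq bound met op} give $\|Tf\|_{M^p}\le\|\sigma_1^\w\|_{M^p\to M^p}\|\muS f\|_{M^p}\le C\|f\|_{M^p}$ with $C$ depending on $\|\sigma_1\|_{M^{\infty,1}}$ and on the singular values of $S$. For (iii), I would use the two forms in (i): write $T_1=\mu(S_1)\rho_1^\w$ and $T_2=\sigma_2^\w\mu(S_2)$ with $\rho_1,\sigma_2\in M^{\infty,1}(\rdd)$, so that $T_1T_2=\mu(S_1)(\rho_1\#\sigma_2)^\w\mu(S_2)$ and $\rho_1\#\sigma_2\in M^{\infty,1}(\rdd)$ by Proposition \ref{prop sjo prop}(v). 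Pushing $\mu(S_1)$ to the right via \eqref{eq symp cov weyl} yields $T_1T_2=((\rho_1\#\sigma_2)\circ S_1\inv)^\w\mu(S_1)\mu(S_2)$; since $\mu(S_1)\mu(S_2)$ equals $\mu(S_1S_2)$ up to a sign and $(\rho_1\#\sigma_2)\circ S_1\inv\in M^{\infty,1}(\rdd)$ by Lemma \ref{lem unif stima norma}, part (i) gives $T_1T_2\in FIO(S_1S_2)$.

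For (iv), if $T=\sigma_1^\w\muS\in FIO(S)$ is invertible on $L^2(\rd)$, then $\sigma_1^\w=T\muS\inv$ is invertible on $L^2(\rd)$ (composition of invertible operators), so by the Wiener property of the Sj\"ostrand class its inverse is again a Weyl operator with symbol $\tau\in M^{\infty,1}(\rdd)$, whence $T\inv=\muS\inv\tau^\w$. Writing $\muS\inv$ as a unimodular multiple of $\mu(S\inv)$ exhibits $T\inv$ in the second form of (i) with $S$ replaced by $S\inv$ (and symbol in $M^{\infty,1}(\rdd)$), so $T\inv\in FIO(S\inv)$. I expect the only genuinely delicate point to be the forward implication in (i) — the careful tracking of Gabor-matrix decay under the change of window and under composition with $\muS$ — which is where the fine phase-space localization theory does the real work; once (i) is in hand, the remaining parts are soft consequences of the Banach-algebra and covariance structure.
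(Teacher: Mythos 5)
The paper does not actually prove Theorem~\ref{thm gen met properties}: it states it as a known result of the theory of generalized metaplectic operators and refers to \cite{CGNR_wienerFIO_13,CGNR_genmet14} for the details, so there is no in-paper proof to compare against. Your proposal is, nevertheless, a faithful reconstruction of the line of argument in those references. You correctly identify part (i) as the substantive claim, and the tools are the right ones: the intertwining $\mu(S)\pi(z)\mu(S)^{-1}=c\,\pi(Sz)$ transfers the Gabor-matrix concentration along the graph of $S$ for $T$ into near-diagonal concentration for $R=T\mu(S)^{-1}$, and Gr\"ochenig's almost-diagonalization characterization of $M^{\infty,1}$ pseudodifferential operators (recalled in Section~\ref{sec sjo}) then yields $\sigma_1\in M^{\infty,1}(\rdd)$; the converse runs the same estimates backwards; and symplectic covariance \eqref{eq symp cov weyl} gives $\sigma_2=\sigma_1\circ S$, with $\sigma_2\in M^{\infty,1}(\rdd)$ by Lemma~\ref{lem unif stima norma}. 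Parts (ii)--(iv) are then soft consequences exactly as you say: (ii) from \eqref{eq cont sjo} and \eqref{eq bound met op}; (iii) from the two one-sided factorizations, the $\#$-algebra property (Proposition~\ref{prop sjo prop}(v)), the change-of-variable lemma, and $\mu(S_1)\mu(S_2)=\pm\mu(S_1S_2)$; (iv) from the Wiener property of the Sj\"ostrand class. The one thing to flag is not a gap but a dependency: the paper never records the definition of $FIO(S)$ (it only alludes to ``a precise condition... in terms of the Gabor matrix decay''), so your proof of (i) necessarily imports that definition from \cite{CGNR_wienerFIO_13,CGNR_genmet14} — which is exactly what the paper itself implicitly does by citing them.
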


The following result extends \eqref{met int formula} and thus provides a concrete picture of generalized metaplectic operators for non-exceptional times. 
\begin{theorem}\label{thm gen met ker} Consider the problem \eqref{eq schro base} and let $U(t,s)=U_0(t,s)a(t,s)^\w \in FIO(S_{t-s})$ the corresponding propagator as discussed above - cf.\ in particular \eqref{eq gen met int rep}. For $t-s \in (0,T] \setminus \te$, there exists a symbol $a'(t,s) = a'(t,s,\cdot) \in \Sjo(\rdd)$ such that 
	\begin{equation}\label{}
		U(t,s)f(x) = c(t-s) \lvert\det B_{t-s}\rvert^{-1/2} \ird e^{2\pi i \Phi_{t-s}(x,y)} a'(t,s,x,y)f(y) dy.
	\end{equation}
Moreover, there exists $C=C(t-s)$ such that 
\begin{equation}\label{eq est a'}
 \| a'(t,s) \|_{\Sjo} \le C \|	a(t,s) \|_{\Sjo}.
\end{equation}
\end{theorem}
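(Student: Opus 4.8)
The plan is to compute the Schwartz kernel of $U(t,s)$ by composing the explicit kernel of its metaplectic factor with the Weyl kernel of $a(t,s)^\w$, to factor out the quadratic chirp $e^{2\pi i\Phi_{t-s}(x,y)}$ and the normalisation $\lvert\det B_{t-s}\rvert^{-1/2}$, and to recognise the remaining amplitude $a'(t,s,\cdot)$ as an element of $\Sjo(\rdd)$; the quantitative bound \eqref{eq est a'} will then be harvested from a soft functional-analytic argument.

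First I would observe that, since $t-s\in(0,T]\setminus\te$, the matrix $S_{t-s}$ is free, so $U_0(t,s)=c(t-s)\mu(S_{t-s})$ admits the integral representation \eqref{met int formula}. Using $U(t,s)=U_0(t,s)a(t,s)^\w$ with $a(t,s)\in\Sjo(\rdd)$ (Lemma \ref{lem stima norma a}) and the fact that, by \eqref{eq def wig}, the kernel of $a(t,s)^\w$ is $(w,y)\mapsto\int_{\rd}e^{2\pi i(w-y)\cdot\xi}a(t,s)(\tfrac{w+y}{2},\xi)\,d\xi$, a formal composition gives
\[ u(t,s,x,y)=c(t-s)\lvert\det B_{t-s}\rvert^{-1/2}\int_{\rd}\int_{\rd}e^{2\pi i\Phi_{t-s}(x,w)}e^{2\pi i(w-y)\cdot\xi}a(t,s)\Big(\tfrac{w+y}{2},\xi\Big)\,d\xi\,dw. \]
Substituting $w=2v-y$ and using that the blocks $B_{t-s}^{-1}A_{t-s}$ and $D_{t-s}B_{t-s}^{-1}$ of $\Phi_{t-s}$ in \eqref{eq phit} are symmetric, one checks that $\Phi_{t-s}(x,2v-y)-\Phi_{t-s}(x,y)$ depends on $x$ only linearly, so the factor $e^{2\pi i\Phi_{t-s}(x,y)}$ can be pulled out of the integral; this exhibits $a'(t,s,x,y)$ as an explicit oscillatory integral in which $a(t,s)$ appears composed with affine and symplectic-linear maps depending only on $S_{t-s}$, together with modulations and a quadratic chirp in the integration variable. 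Equivalently, rewriting $U(t,s)=c(t-s)(a(t,s)\circ S_{t-s}^{-1})^\w\mu(S_{t-s})$ via the symplectic covariance \eqref{eq symp cov weyl} (with $a(t,s)\circ S_{t-s}^{-1}\in\Sjo(\rdd)$ by Lemma \ref{lem unif stima norma}), the amplitude $a'(t,s,\cdot)$ is the image of $a(t,s)\circ S_{t-s}^{-1}$ under the linear map sending the operator symbol of a member of $FIO(S_{t-s})$ to its kernel amplitude.

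The crux is that this symbol-to-amplitude map is bounded on $\Sjo(\rdd)$; this is exactly the kernel characterisation of $FIO(S)$ for a free $S$, cf.\ \cite{CGNR_genmet14} and \cite{NT_CMP20}, which I would either quote or reprove by reducing modulo the linear changes of variables handled by Lemma \ref{lem unif stima norma} and then estimating the Gabor matrix of the residual ``metaplectic'' transform on the doubled phase space $\bR^{4d}$. Granting the qualitative conclusion $a'(t,s,\cdot)\in\Sjo(\rdd)$, the estimate \eqref{eq est a'} follows from the closed graph theorem: the linear map $a(t,s)\mapsto a'(t,s,\cdot)$ is continuous from $\Sjo(\rdd)$ into $\cS'(\rdd)$ — being the composition of $a(t,s)\mapsto U(t,s)$ (continuous because Weyl operators with Sj\"ostrand symbols are uniformly bounded on $L^2(\rd)$, Proposition \ref{prop sjo prop}), passage to the Schwartz kernel, and multiplication by the smooth factor $c(t-s)^{-1}\lvert\det B_{t-s}\rvert^{1/2}e^{-2\pi i\Phi_{t-s}}$ — it takes values in the Banach space $\Sjo(\rdd)$, and $\Sjo(\rdd)\hookrightarrow\cS'(\rdd)$ continuously, so its graph is closed and it is bounded. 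The resulting operator norm depends on $S_{t-s}$, hence on $t-s$, which is why $C=C(t-s)$ is not uniform and degenerates as $t-s$ approaches $\te$.

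The main obstacle I expect is precisely the boundedness on $\Sjo(\rdd)$ of the symbol-to-amplitude map: neither multiplication by the quadratic chirp $e^{2\pi i\Phi_{t-s}}$ nor the partial Fourier transforms arising in the kernel computation preserves $\Sjo(\rdd)$ in isolation (each maps it into a mixed-norm modulation/Wiener-amalgam space with unequal roles for the two $\rd$-factors), and it is only their specific combination — the rigidity of the kernels of generalised metaplectic operators attached to a free matrix — that lands one back in $\Sjo(\rdd)$. Making this precise, with enough uniformity to feed the closed graph argument, is the technical heart; the rest is bookkeeping with the explicit quadratic phase $\Phi_{t-s}$ together with routine applications of Lemma \ref{lem unif stima norma} and the algebra properties recorded in Proposition \ref{prop sjo prop}.
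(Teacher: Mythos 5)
Your argument is correct and reaches the stated conclusion, but the route to the quantitative bound \eqref{eq est a'} is genuinely different from the paper's. The paper quotes from \cite{CGNR_genmet14} (see also \cite[Proposition 5.2]{CGNR_genmet14}) the \emph{explicit} factorisation of the symbol-to-amplitude map: $a'(t,s)=\cU_2\cU\cU_1(a(t,s)\circ S_{t-s}^{-1})$, where $\cU_1,\cU_2$ are compositions with the linear matrices $U_1=\begin{bmatrix}I&O\\A_{t-s}&I\end{bmatrix}$, $U_2=\begin{bmatrix}I&O\\O&B_{t-s}^\top\end{bmatrix}$, and $\cU$ is the Fourier-side chirp $\widehat{\cU\sigma}(\xi,\eta)=e^{\pi i\xi\cdot\eta}\widehat\sigma(\xi,\eta)$. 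The bound then follows by applying Lemma \ref{lem unif stima norma} to $\cU_1,\cU_2$ and the linear-change $S_{t-s}^{-1}$, and invoking \cite[Corollary 14.5.5]{Gro_book} for $\cU$; the non-uniformity in $t-s$ is then traced precisely to $\det U_2=\det B_{t-s}$. You instead black-box the qualitative conclusion $a'(t,s,\cdot)\in\Sjo(\rdd)$ (citing the same sources), build the map $a(t,s)\mapsto a'(t,s,\cdot)$ as a composition that is demonstrably continuous into $\cS'(\rdd)$, and deduce \eqref{eq est a'} from the closed graph theorem. This is legitimate — the well-definedness and linearity of the symbol-to-amplitude map, the continuity of $a\mapsto U_0(t,s)a^\w$ in operator norm, the continuity of passage to the Schwartz kernel, and the continuity of multiplication by the tempered chirp $e^{-2\pi i\Phi_{t-s}}$ all hold, and $\Sjo(\rdd)\hookrightarrow\cS'(\rdd)$ continuously, so the graph is closed and boundedness follows. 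What you lose relative to the paper's explicit factorisation is any transparent control on how $C(t-s)$ behaves; the paper's route pinpoints the degeneration as $|\det B_{t-s}|^{-1}$ through \eqref{eq est linear comp prec}, whereas the closed-graph route only yields existence of a $t$-dependent constant and leaves the blow-up near $\te$ to be inferred heuristically, as you acknowledge. Either way the statement as formulated is obtained.
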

\begin{proof}
	The representation of $T \in FIO(S)$ as a Fourier integral operator is proved in \cite[Theorem 5.1]{CGNR_genmet14}, so that the claim follows after easy modifications - cf. e.g.\ \cite[Lemma 3.1]{NT_CMP20}. It remains to prove the estimate \eqref{eq est a'}. In the aforementioned results (see also \cite[Proposition 5.2]{CGNR_genmet14}) it is shown by direct computation that
	\[
	a'(t,s) = \cU_2 \cU \cU_1 (a(t,s)\circ S_{t-s}^{-1}),
	\]
	where $\cU,\cU_1,\cU_2$ are the mappings defined on $\sigma \in \Sjo(\rdd)$ by
	\[\cU_1\sigma =\sigma\circ U_1, \quad \cU_2\sigma=\sigma\circ U_2,\quad \widehat{\cU\sigma}(\xi,\eta)=e^{\pi i \xi \cdot \eta}\widehat{\sigma}(\xi,\eta),\]
	\[ U_1 = \begin{bmatrix}
		I & O \\ A_{t-s} & I
	\end{bmatrix}, \quad U_2 = \begin{bmatrix}
	I & O \\ O & B_{t-s}^\top
\end{bmatrix}, \] where $A_{t-s}$ and $B_{t-s}$ come from the block decomposition of $S_{t-s}$ in \eqref{eq block St}. For what concerns $\cU$, the proof of \cite[Corollary 14.5.5]{Gro_book} precisely shows that $\Sjo(\rdd)$ is invariant under the action of $\cU$. The desired conclusion thus follows by repeated application of Lemma \ref{lem unif stima norma} - note that $U_1$ is a symplectic matrix for any $t-s$, while $\det U_2 = \det B_{t-s}$; as a result, \eqref{eq est linear comp prec} shall be used and uniformity with respect to $t-s$ cannot be achieved. 
\end{proof}
\begin{remark} The previous representation result extends to any $T \in FIO(S)$ - in particular it holds for $E(\Omega;t,s) \in FIO(S_{t-s})$ in \eqref{def Eots}, as claimed in \eqref{eq Eots int rep}. We emphasize that also in this case we have that 
	\begin{equation}\label{eq e' est}
		\| e'(\Omega;t,s) \|_{\Sjo} \le C \|e(\Omega;t,s)\|_{\Sjo},
	\end{equation} for a constant $C=C(t-s)>0$ that depends on $t-s \in (0,T] \notin \te$. 
\end{remark}
\section{Proof of the main results}
\subsection{Preliminary estimates}
For this and the subsequent sections we refer to the problem \eqref{eq schro base} with the stated assumptions. We also use the notation introduced in Section \ref{sec intro}.

We first prove that the symbols $a(t,s)$ belong to a bounded subset of $\Sjo(\rdd)$. 
\begin{lemma}\label{lem stima norma a}
	Fix $T>0$ and let $s,t \in \bR$ be such that $0<t-s\le T$. Let $a(t,s)$ be defined as in \eqref{eq def ats}. Then $a(t,s) \in \Sjo(\rdd)$ and there exists $C=C(T)>0$ such that
	\begin{equation}\label{eq stima norma a}
		\| a(t,s) \|_{\Sjo} \le C. 
	\end{equation}
\end{lemma}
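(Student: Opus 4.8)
The plan is to estimate the Dyson--Phillips series \eqref{eq def ats} term by term in the Banach-algebra norm of $\Sjo(\rdd)$ (normalized as in Remark \ref{rem algebra}, so that $\|\rho \# \sigma\|_{\Sjo} \le \|\rho\|_{\Sjo}\|\sigma\|_{\Sjo}$ and $\|1\|_{\Sjo}=1$), and to sum the resulting bound. First I would record a uniform bound on the building blocks $b(\tau,s) = \sigma_\tau \circ S_{\tau-s}$. By the narrow-continuity assumption on $t \mapsto \sigma_t$, we have $\sup_{t \in \bR}\|\sigma_t\|_{\Sjo} \le \|h\|_{L^1} =: K < \infty$; combining this with Lemma \ref{lem unif stima norma} (the uniform change-of-variables estimate \eqref{eq est linear comp}) applied to the continuous flow $\tau \mapsto S_{\tau-s}$ on the compact parameter window $\tau - s \in [0,T]$, I obtain a constant $C_0 = C_0(T)$ with
\[
 \sup_{s \le \tau,\ \tau - s \le T} \| b(\tau,s)\|_{\Sjo} \le C_0(T).
\]

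Next I would feed this into the $n$-th term of \eqref{eq def ats}. Since $\Sjo(\rdd)$ is a Banach algebra under $\#$ (Proposition \ref{prop sjo prop}(v)), the iterated twisted product satisfies $\|b(t_1,s)\# \cdots \# b(t_n,s)\|_{\Sjo} \le C_0(T)^n$ for all admissible $t_1,\dots,t_n$. The iterated time integral is over the simplex $s \le t_n \le \cdots \le t_1 \le t$, which has volume $(t-s)^n/n! \le T^n/n!$. Hence the $n$-th term is bounded in $\Sjo$-norm by $(2\pi)^n C_0(T)^n T^n / n!$, and therefore
\[
 \| a(t,s)\|_{\Sjo} \le 1 + \sum_{n \ge 1} \frac{(2\pi C_0(T) T)^n}{n!} = \exp\big(2\pi C_0(T)\, T\big) =: C(T),
\]
which is the desired bound \eqref{eq stima norma a}, with $C$ depending only on $T$. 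The absolute convergence of this majorant series also justifies that $a(t,s)$, defined a priori as a series in $\cS'(\rdd)$, actually converges in the $\Sjo$-norm and hence defines an element of $\Sjo(\rdd)$.

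The only genuinely delicate point is the passage from the term-by-term bound in $\cS'(\rdd)$ to an honest element of $\Sjo(\rdd)$: one must check that the $\Sjo$-norm is lower semicontinuous (or complete enough) so that the $\cS'$-limit of the partial sums inherits the finite $\Sjo$-bound. This is standard — $\Sjo = M^{\infty,1}$ is a dual Banach space and the partial sums form a norm-Cauchy sequence by the majorant just computed — but it is worth stating. A secondary bookkeeping point is that Lemma \ref{lem unif stima norma} requires $S_{\tau-s} \in \GLL$ for each relevant $\tau$, which holds since symplectic matrices are invertible, and the continuity of $\tau \mapsto S_{\tau-s}$ is exactly \eqref{eq block St}; the factors $|\det S_{\tau-s}|^{-1}=1$ so no degeneracy intervenes here (in contrast with the situation at exceptional times in Theorem \ref{thm gen met ker}). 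Everything else is routine.
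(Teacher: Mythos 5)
Your proposal is correct and follows essentially the same route as the paper's proof: bound $b(\tau,s)=\sigma_\tau\circ S_{\tau-s}$ uniformly via narrow continuity and Lemma \ref{lem unif stima norma}, then estimate each Dyson--Phillips term by the Banach-algebra property of $(\Sjo,\#)$ together with the simplex volume $(t-s)^n/n!$, and sum the exponential series. (Incidentally, your final exponent $2\pi C_0(T)T$ is the correct one; the paper's displayed $e^{C_1(T)T/2\pi}$ appears to be a typographical slip for $e^{2\pi C_1(T)T}$.)
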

\begin{proof}
Let us set 
	\begin{equation}\label{eq def alphan} \alpha_n(t,s) \coloneqq \int_s^t \int_s^{t_1} \ldots \int_s^{t_{n-1}} b(t_1,s)\#\cdots \# b(t_n,s) \, dt_n \ldots dt_1, \quad n \in \bN, \end{equation}
	so that we can write $a(t,s) = 1+\sum_{n\ge 1} \lc -2\pi i \rc^n \alpha_n(t,s)$. It is clear that $\alpha_n(t,s) \in \Sjo(\rdd)$ since $(\Sjo,\#)$ is a Banach algebra (cf.\ Proposition \ref{prop sjo prop}) and $b(\tau,s) = \sigma_\tau \circ S_{\tau-s} \in \Sjo(\rdd)$ for all $\tau>s$ in view of Lemma \ref{lem unif stima norma}. To be precise, we have
	\begin{align*}
		\| \alpha_n(t,s) \|_{\Sjo}  & = \left\| \int_s^t \int_s^{t_1} \ldots \int_s^{t_{n-1}} b(t_1,s)\#\cdots \# b(t_n,s) \, dt_n \ldots dt_1 \right\|_{\Sjo} \\
		& \le \int_s^t \int_s^{t_1} \ldots \int_s^{t_{n-1}} \| b(t_1,s)\|_{\Sjo} \cdots \| b(t_n,s)\|_{\Sjo} \, dt_n \ldots dt_1 \\
		& \le \frac{(t-s)^n}{n!} \lc \sup_{\tau \in [s,t]} \|b(\tau,s)\|_{\Sjo} \rc^n  \\ 
		& \le \frac{C_0(T)^n(t-s)^n}{n!} \lc \sup_{\tau \in [s,t]} \|\sigma_\tau \|_{\Sjo} \rc^n \\ 
		& \le \frac{C_1(T)^n}{n!}(t-s)^n, \numberthis \label{eq ats norm}
	\end{align*} where we set $C_1(T)=C_0(T)\lc \sup_{\tau \in [s,t]} \|\sigma_\tau\|_{\Sjo} \rc$, in particular $C_0(T)$ comes from the estimate \eqref{eq est linear comp} and the supremum is finite since $t\mapsto \sigma_t$ is continuous for the narrow convergence. The claim follows from \eqref{eq ats norm}, since
	\[ \|a(t,s)\|_{\Sjo} \le 1 + \sum_{n\ge 1} (2\pi)^n \|\alpha_n(t,s)\|_{\Sjo} \le e^{C_1(T)T/2\pi} \eqqcolon C. \qedhere \] 
\end{proof}

We prove below that the symbols $e(t,s)$ introduced in \eqref{eq def ets} are good short-time approximations of $a(t,s)$. 
\begin{lemma}\label{lem est diff a-e}
	Fix $T>0$ and let $s,t \in \bR$ be such that $0<t-s\le T$. Consider $a(t,s)$ and $e(t,s)$ as defined in \eqref{eq def ats} and \eqref{eq def ets} respectively. We have $a(t,s),e(t,s) \in \Sjo(\rdd)$ and there exists $C=C(T)>0$ such that
	\begin{equation}\label{eq est diff a-e}
		\| a(t,s) - e(t,s) \|_{\Sjo} \le C (t-s)^2.
	\end{equation}
\end{lemma}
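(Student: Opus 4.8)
The plan is to compare the Dyson--Phillips expansion \eqref{eq def ats} for $a(t,s)$ with the Taylor expansion of the exponential defining $e(t,s)$ in \eqref{eq def ets}, estimating the difference term by term in the Banach algebra $(\Sjo(\rdd),\#)$. First I would note that $a(t,s)\in\Sjo(\rdd)$ by Lemma~\ref{lem stima norma a}, and that $e(t,s)\in\Sjo(\rdd)$ because $\int_s^t b(\tau,s)\,d\tau\in\Sjo(\rdd)$ (integral of a continuous, uniformly bounded $\Sjo$-valued map, using Lemma~\ref{lem unif stima norma} and narrow continuity of $t\mapsto\sigma_t$) and the exponential series converges absolutely in the Banach algebra $(\Sjo,\#)$, since $\|\sigma^{\#n}\|_{\Sjo}\le\|\sigma\|_{\Sjo}^n$ under the equivalent algebra norm of Remark~\ref{rem algebra}. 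Both symbols equal $1$ plus a sum of $n$-fold twisted products of $b$'s, so I would write
\[
a(t,s)-e(t,s)=\sum_{n\ge 1}(-2\pi i)^n\lc \alpha_n(t,s)-\beta_n(t,s)\rc,
\]
where $\alpha_n(t,s)$ is the ordered-simplex integral \eqref{eq def alphan} and $\beta_n(t,s)=\frac1{n!}\lc\int_s^t b(\tau,s)\,d\tau\rc^{\#n}$ is the corresponding term from $\exp$.

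The key observation is that $\beta_n(t,s)$ can itself be rewritten as an integral over the full cube $[s,t]^n$ of $b(\tau_{1},s)\#\cdots\# b(\tau_{n},s)$ divided by $n!$, whereas $\alpha_n(t,s)$ is the integral of the \emph{same} integrand but only over the ordered simplex $s\le\tau_n\le\cdots\le\tau_1\le t$; by symmetry of the cube under permutations (note the integrand is \emph{not} symmetric because $\#$ is noncommutative, but the cube decomposes into $n!$ simplices each carrying one ordering), one gets
\[
\beta_n(t,s)=\frac{1}{n!}\sum_{\pi\in\fS_n}\int_{s\le\tau_n\le\cdots\le\tau_1\le t} b(\tau_{\pi(1)},s)\#\cdots\# b(\tau_{\pi(n)},s)\,d\tau_n\cdots d\tau_1,
\]
so $\alpha_n(t,s)-\beta_n(t,s)$ is an average over $\pi$ of differences of simplex integrals of reorderings of the same $n$ factors. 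For $n=1$ this difference vanishes, which is why the leading error is $O((t-s)^2)$; for $n\ge 2$, each such difference, estimated in $\Sjo$ via the triangle inequality and the algebra property, is bounded by $\mathrm{vol}(\text{simplex})\cdot\lc\sup_{\tau\in[s,t]}\|b(\tau,s)\|_{\Sjo}\rc^n\le \frac{(t-s)^n}{n!}C_1(T)^n$ (as in \eqref{eq ats norm}), and for $n\ge 2$ one has $(t-s)^n\le T^{n-2}(t-s)^2$. Summing the geometric-type series $\sum_{n\ge 2}(2\pi)^n\frac{C_1(T)^n T^{n-2}}{n!}(t-s)^2$ gives the claimed bound $C(T)(t-s)^2$.

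The main obstacle is handling the noncommutativity of $\#$ correctly in the identification of $\beta_n$ as a symmetrized simplex integral: one must be careful that the cube $[s,t]^n$ decomposes (up to a null set) into $n!$ closed simplices indexed by the orderings, and that on each simplex the integrand is $b(\tau_{\pi(1)},s)\#\cdots\# b(\tau_{\pi(n)},s)$ with the \emph{largest} time variable first, matching the convention in \eqref{eq def ats}. An alternative, perhaps cleaner, route that avoids symmetrization altogether is to subtract inductively: write $e(t,s)$ via the fundamental theorem of calculus as the solution of $\partial_t e(t,s)=-2\pi i\, b(t,s)\# e(t,s)$ only up to a commutator defect (since $\partial_t\exp(X(t))\ne X'(t)\#\exp(X(t))$ when $[X(t),X'(t)]_{\#}\ne 0$), and likewise $\partial_t a(t,s)=-2\pi i\, b(t,s)\# a(t,s)$ exactly; then $w\coloneqq a-e$ solves a Volterra equation $w(t,s)=-2\pi i\int_s^t b(\tau,s)\# w(\tau,s)\,d\tau + R(t,s)$ with a remainder $R(t,s)$ coming from the commutator defect, and one checks $\|R(t,s)\|_{\Sjo}=O((t-s)^2)$ directly from the double-integral expression $\int_s^t\int_s^\tau [b(\tau,s),b(\tau',s)]_{\#}\,d\tau'\,d\tau$ composed with $\exp$, followed by a Gr\"onwall-type iteration in $\Sjo$ to propagate the $O((t-s)^2)$ bound. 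Either way the quantitative input is always the algebra estimate $\|\sigma\#\rho\|_{\Sjo}\le\|\sigma\|_{\Sjo}\|\rho\|_{\Sjo}$ together with the uniform bound $\sup_{\tau\in[s,t]}\|b(\tau,s)\|_{\Sjo}\le C(T)$ from Lemma~\ref{lem unif stima norma}.
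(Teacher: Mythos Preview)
Your proposal rests on a misreading of the definition \eqref{eq def ets}: in the paper $e(t,s)=\exp\bigl(-2\pi i\int_s^t b(\tau,s)\,d\tau\bigr)$ is the \emph{pointwise} exponential of a function on $\rdd$, not the exponential in the twisted-product algebra $(\Sjo,\#)$. The paper's own proof makes this explicit by invoking the pointwise Banach-algebra structure of $\Sjo$ (Proposition~\ref{prop sjo prop}(iv)) to control the terms $\varepsilon_n(t,s)=\frac{1}{n!}\bigl(\int_s^t b(\tau,s)\,d\tau\bigr)^n$. Consequently your $\beta_n=\frac{1}{n!}\bigl(\int_s^t b\bigr)^{\#n}$ are \emph{not} the terms of the expansion of $e(t,s)$, and what your argument actually bounds is $\|a(t,s)-\tilde e(t,s)\|_{\Sjo}$ for a different object $\tilde e=\exp_{\#}(\cdot)$. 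Both the symmetrization over $\fS_n$ and the commutator-defect Volterra route you sketch are tailored to this incorrect $\#$-interpretation; for the genuine $e(t,s)$ one has $\partial_t e(t,s)=-2\pi i\, b(t,s)\cdot e(t,s)$ exactly (pointwise product, no defect), so the ``remainder'' in your Volterra scheme is not the one you describe.

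Once the correct interpretation is in place the argument is much simpler than either of your routes. Your key observation survives: $\alpha_1=\varepsilon_1=\int_s^t b(\tau,s)\,d\tau$, so the $n=1$ contributions cancel and this is precisely why the error is $O((t-s)^2)$. For $n\ge 2$ the paper does not attempt any direct comparison of $\alpha_n$ with $\varepsilon_n$; it simply uses the crude triangle inequality $\|\alpha_n-\varepsilon_n\|_{\Sjo}\le\|\alpha_n\|_{\Sjo}+\|\varepsilon_n\|_{\Sjo}$ and bounds each piece by $\frac{C_1(T)^n}{n!}(t-s)^n$ via the respective algebra estimates ($\#$ from Proposition~\ref{prop sjo prop}(v) for $\alpha_n$, pointwise product from (iv) for $\varepsilon_n$). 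Summing from $n=2$ yields $C(T)(t-s)^2$. No symmetrization, no Gr\"onwall iteration, no commutator analysis is needed.
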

\begin{proof} First of all, note that 
	\[ e(t,s) = \exp\lc -2\pi i \int_s^t b(\tau,s)d\tau\rc = \sum_{n\ge 0} \lc -2\pi i \rc^n \frac{1}{n!} \lc  \int_s^t b(\tau,s)d\tau\rc^n.  \] To align the notation with the one introduced before, we define 
	\[ \varepsilon_n(t,s) \coloneqq \frac{1}{n!} \lc  \int_s^t b(\tau,s)d\tau\rc^n, \quad n\in \bN, \] so that $ e(t,s) = 1+\sum_{n\ge 1} \lc -2\pi i \rc^n \varepsilon_n(t,s)$. Recall from Proposition \ref{prop sjo prop} that $\Sjo$ is a Banach algebra under pointwise multiplication, then arguing as in the proof of Lemma \ref{lem stima norma a} we readily obtain
	\[ \|\varepsilon_n(t,s) \|_{\Sjo} \le \frac{C_1(T)^n}{n!} (t-s)^n, \] where the constant $C_1(T)$ equals the one appearing in \eqref{eq ats norm}. 
	 As a result, we have
	\begin{align*}
		\| e(t,s)-a(t,s)\|_{\Sjo} & = \left\| \sum_{n\ge 2} \lc -2\pi i \rc^n \left[ \varepsilon_n(t,s) - \alpha_n(t,s) \right] \right\|_{\Sjo} \\
		& \le \sum_{n\ge 2} (2\pi)^{-n} \lc \| \varepsilon_n(t,s)\|_{\Sjo} + \| \alpha_n(t,s) \|_{\Sjo}\rc \\
		& \le 2\sum_{n\ge 2} \frac{C_1(T)^n}{(2\pi)^n n!}(t-s)^n \\
		& \le C (t-s)^2,
	\end{align*} where we set $ C =(C_1^2/2\pi^2)e^{C_1T/2\pi}$. 
\end{proof}

\subsection{Proof of Theorem \ref{mainthm}}
The strategy of the proof of Theorem \ref{mainthm} is largely inspired by the pioneering one of \cite[Lemma 3.2]{Fujiwara_duke_80}. The latter was subsequently generalized in \cite[Theorem 10]{NT_JMP19}. 
%\begin{proposition}\label{prop main}
%	Let $(B,\star)$ be a Banach algebra. Assume that for some $\delta>0$ there are families $\{ X(t,s),Y(t,s) : t,s \in \bR\} \subset B $ satisfying the following conditions.
%	\begin{enumerate}[label=(\roman*)]
%		\item $Y$ satisfies the composition law $Y(t,\tau)Y(\tau,s)=Y(t,s)$ for every $s<\tau<t$.
%		\item For every $T>0$ there exists a constant $C_0 = C_0(T)\geq 1$ such that 
%		\begin{equation}\label{chiave0}
%			\|Y(t,s)\|_{B}\leq C_0(T) \quad \text{ for } \quad 0<t-s\leq T.
%		\end{equation}
%		\item There exist  $C_1=C_1(T)>0$ and $N \in \bN$ such that
%		\begin{equation}\label{chiave1}
%			\|X(t,s)-Y(t,s)\|_B\leq C_1(T) (t-s)^{N+1} \quad \text{ for }\quad 0<t-s\leq T.
%		\end{equation}
%	\end{enumerate}	Fix $\delta >0$. For any subdivision $\Omega:s=t_0<t_1<\ldots<t_L=t$ of the interval $[s,t]$, with $\omega(\Omega)=\sup\{t_j-t_{j-1}: j=1,\ldots,L\}<\delta$, consider therefore the composition $X(\Omega,t,s)$ defined by 
%	\[
%	X(\Omega,t,s) = X(t_L,t_{L-1}) \star X(t_{L-1},t_{L-2}) \star \cdots \star X(t_1,t_0), 
%	\]
%	Then, for every $T>0$ there exists a constant $C=C(N,T,\delta)>0$ such that 
%	\begin{equation}\label{daverL2}
%		\|X(\Omega,t,s)-Y(t,s)\|_{B}
%		\leq C \omega(\Omega)^{N}(t-s) 
%	\end{equation}for $0<t-s\leq T$. More precisely, $		C=C_0(T)^2C_1(\delta)\exp(C_0(T)C_1(\delta) \delta^N T)$.
%\end{proposition}
\begin{proof}[Proof of Theorem \ref{mainthm}]
Let us focus on deriving \eqref{eq conv main} first. Consider the propagator $U(t,s)$ in \eqref{eq propag}; it is not difficult to realize that the corresponding group property and the symplectic covariance of Weyl calculus (recall that $U_0(t,s)=c(t-s)\mu(S_{t-s})$) imply the following composition law for $s<\tau<t$:
\begin{equation}\label{eq a comp law}
	a(t,s) = (a(t,\tau)\circ S_{\tau-s})\#a(\tau,s).
\end{equation}
Iteration of this procedure referring to the subdivision $\Omega$ in the claim brings us to introduce modified short-time symbols as follows:
\begin{equation}\label{eq atilde}
	\wt{a}(t_{j+1},t_{j}) \coloneqq a(t_{j+1},t_{j})\circ S_{t_{j}-t_0}, \quad j=0,\ldots, L-1.
\end{equation} As a result, we obtain the decomposition
\begin{equation}\label{eq a comp}
	a(t,s) = \wt{a}(t_L,t_{L-1}) \# \ldots \# \wt{a}(t_1,t_0).
\end{equation} 
Note that a combination of Lemmas \ref{lem unif stima norma} and \ref{lem stima norma a} provides the existence of a constant $C_0=C_0(T)\ge 1$ such that
\begin{equation}\label{eq est atilde}
	\| \wt{a}(t_{j+1},t_{j}) \|_{\Sjo} \le C_0(T), \quad j=0,\ldots, L-1.
\end{equation} 
We also remark that the symbols $\wt{a}$ satisfy the composition property
\begin{equation}\label{eq semig atilde}
\wt{a}(t_{j+1},t_j)\#\wt{a}(t_j,t_{j-1}) = \wt{a}(t_{j+1},t_{j-1}) \quad j=1,\ldots,L-1,	
\end{equation} which can be easily verified as follows, using the symplectic covariance property and the composition law \eqref{eq a comp law} (the latter is enough if $j=1$):
\begin{align*}
	\wt{a}(t_{j+1},t_j)^{\w}\wt{a}(t_j,t_{j-1})^{\w} & = U_0(t_0,t_j)a(t_{j+1},t_j)^{\w}U_0(t_j,t_0)U_0(t_0,t_{j-1})a(t_j,t_{j-1})^{\w}U_0(t_{j-1},t_0)
	\\ & = U_0(t_0,t_j)a(t_{j+1},t_j)^{\w}U_0(t_j,t_{j-1})a(t_j,t_{j-1})^{\w}U_0(t_{j-1},t_0) \\
	& =U_0(t_0,t_j)U_0(t_j,t_{j-1}) (a(t_{j+1},t_j)\circ S_{t_j-t_{j-1}})^{\w}a(t_j,t_{j-1})^{\w}U_0(t_{j-1},t_0) \\
	& = U_0(t_0,t_{j-1}) (a(t_{j+1},t_j)\circ S_{t_j-t_{j-1}})^{\w}a(t_j,t_{j-1})^{\w}U_0(t_{j-1},t_0) \\
	& = U_0(t_0,t_{j-1}) a(t_{j+1},t_{j-1})^{\w}U_0(t_{j-1},t_0) \\
	& = (a(t_{j+1},t_{j-1}) \circ S_{t_{j-1}-t_0})^{\w} \\
	& = \wt{a}(t_{j+1},t_{j-1})^{\w}.
\end{align*}

Consider now the approximate propagator $E(\Omega;t,s)$ defined in \eqref{def Eots}; a straightforward application of the symplectic covariance formula shows that
\begin{equation}
	E(\Omega;t,s) =  E(t_L,t_{L-1})\cdots E(t_1,t_0) = U_0(t,s)e(\Omega;t,s)^{\w},
\end{equation} where 
\begin{equation}\label{eq def eots}
	e(\Omega;t,s)=\wt{e}(t_L,t_{L-1}) \# \cdots \# \wt{e}(t_1,t_0),
\end{equation} the modified symbols $\wt{e}(t_{j+1},t_j)$, $j=0,\ldots,L-1$, being defined as in \eqref{eq atilde}. 

Set $r(t,s) \coloneqq \wt{e}(t,s) - \wt{a}(t,s)$; by \eqref{eq est diff a-e} and Lemma \ref{lem unif stima norma} we infer the bound 
\begin{equation}\label{eq bound r}
	\| r(t,s) \|_{\Sjo} \le C_1(T) (t-s)^2, \quad 0 < t-s \le T,
\end{equation} for some $C_1=C_1(T)>0$. Then we can write
\begin{align*} e(\Omega;t,s)-a(t,s) & = \wt{e}(t_L,t_{L-1}) \cdots  \wt{e}(t_1,s) - \wt{a}(t_L,t_{L-1})  \cdots  \wt{a}(t_1,s)\\ &  = (r(t_L,t_{L-1}) + \wt{a}(t_L,t_{L-1}))\cdots(r(t_1,s) + \wt{a}(t_1,s)) \\
	& \quad -\wt{a}(t_L,t_{L-1})  \cdots  \wt{a}(t_1,s) .
\end{align*} From this point forward, the proof is substantially identical to the one given in \cite{NT_JMP19}. We retrace the main steps for the sake of completeness. A careful inspection of the previous expansion reveals that it ultimately consists of the sum of ordered Weyl products of symbols, each of them having the form 
\begin{equation}\label{eq prod type}
\underbrace{\wt{a}\cdots \wt{a}}_{q_{k+1}}\,\underbrace{r\cdots r}_{p_{k}}\, \underbrace{\wt{a}\cdots \wt{a}}_{q_{k}}\,\cdots\, \underbrace{r\cdots r}_{p_{1}}\,\underbrace{\wt{a}\cdots \wt{a}}_{q_{1}},
\end{equation} where $p_1,\ldots,p_k,q_1,\ldots q_k,q_{k+1}$ are non negative integers (in particular $p_j>0$) that sum to $L$. Note also that the symbols $\wt{a}$ in each block of $q_j$ terms can be grouped using \eqref{eq semig atilde}. 

We are now in the position to bound the $\Sjo$ norm of products of the form \eqref{eq prod type} using the previously derived estimates, namely \eqref{eq est atilde} and \eqref{eq bound r}. Precisely - recall that $C_0 \ge 1$, we have 
\begin{align*}
	&\leq C_0^{k+1}\prod_{j=1}^k \prod_{i=1}^{p_j} C_1 (t_{J_j+i}-t_{J_{j}+i-1})^2\\
	&\leq C_0\prod_{j=1}^k \prod_{i=1}^{p_j} C_0C_1 (t_{J_j+i}-t_{J_{j}+i-1})^2,
\end{align*} where we set $J_j=p_1+\ldots+p_{j-1}+q_1+\ldots +q_j$ for $j\geq2$ and $J_1=q_1$. The sum over $p_1,\ldots,p_k,q_1,\ldots,q_{k+1}$ of terms of this type gives in turn 
\begin{align}
	\| e(\Omega;t,s)-a(t,s)\|_{\Sjo}&\leq C_0\left\{ \prod_{j=1}^L(1+C_0C_1 (t_{j}-t_{j-1})^2) -1 \right\}\\
	&\leq C_0\left\{\exp\left( \sum_{j=1}^L C_0C_1 (t_{j}-t_{j-1})^2\right) -1 \right\}\\
	&\leq C_0\left\{\exp\left(C_0C_1\omega(\Omega)(t-s)\right) -1\right\}\\
	&\leq C(T)\omega(\Omega)
\end{align}	where in the last inequality we used $e^{\tau}-1\leq \tau e^\tau$, for $\tau\geq 0$, and we finally set
\[ C=C(T) = C_0^2C_1T\exp\left(C_0C_1T^2\right). \]
This concludes the proof of \eqref{eq conv main} in the claim. \smallskip

For what concerns uniform convergence of $e(\Omega;t,s)$ to $a(t,s)$ in $\rdd$ as $\omega(\Omega)\to 0$, it is an easy consequence of \eqref{eq conv main} in view of the continuous embedding $\Sjo(\rdd)\subset C(\rdd)\cap L^\infty(\rdd)$, cf.\ Proposition \ref{prop sjo prop}. \smallskip

Lastly, let us prove \eqref{eq conv op norm}. Note that
\[ \| E(\Omega;t,s) - U(t,s)\|_{M^p\to M^p} \le \| U_0(t,s) \|_{M^p \to M^p} \|e(\Omega;t,s)^\w - a(t,s)^\w \|_{M^p\to M^p}. \]
The term $\|U_0(t,s)\|_{M^p\to M^p}$ can be bounded by a constant $C_2$, possibly depending on $T$, since the singular values of $S_{t-s}$ in \eqref{eq bound met op} depend continuously on the  entries of $S_{t-s}$ (cf.\ e.g.\ \cite[Corollary A.4.5]{sontag}) and the latter are in turn continuous functions of $t-s \in \bR$ (note that $S_0 = I$). For the remaining term, by \eqref{eq cont sjo} we have
\[ \|e(\Omega;t,s)^\w-a(t,s)^\w \|_{M^p \to M^p} \le C_3 \| e(\Omega;t,s)-a(t,s) \|_{\Sjo}, \] for an absolute constant $C_3$. The bound in \eqref{eq conv op norm} thus holds with $C=C_2(T)C_3$. 
\end{proof}

\subsection{Proof of Theorem \ref{thm int ker}}
Let us finally provide a proof of the convergence result stated in Theorem \ref{thm int ker} at the level of integral kernels. The technique is similar to that used in \cite{NT_CMP20}.
\begin{proof}[Proof of Theorem \ref{thm int ker}] To lighten the notation we set $k(\Omega;t,s,\cdot) = k(\Omega;t,s)$ and $u(t,s,\cdot) = u(t,s)$, and similarly for $e'(\Omega;t,s)$ and $a'(t,s)$.
\begin{enumerate}
	\item We already showed that the amplitudes belong to the Sj\"ostrand class, cf.\ \eqref{eq est a'} and \eqref{eq e' est} respectively. By virtue of Lemma \ref{lem unif stima norma} and \eqref{eq conv main} we infer \eqref{eq int ampl bound} - note from the proof of Theorem \ref{thm gen met ker} that the relationship between symbols of generalized metaplectic operators and amplitudes of the corresponding integral representations only depend on the underlying metaplectic part $U_0(t,s)$. 
	\item Fix a real-valued function $\Psi\in C_{\mathrm{c}}^{\infty}(\mathbb{R}^{2d})$ with compact support, then choose another real-valued function $\Theta\in C_{\mathrm{c}}^{\infty}(\mathbb{R}^{2d})$ with
	$\Theta=1$ on $\mathrm{supp}\Psi$. We have 
	\begin{align}
		\begin{split} & \left\Vert  \mathcal{F}\left[\left(k(\Omega;t,s)-u(t,s)\right) \Psi \right]\right\Vert _{L^1} \\  & \hspace{2.5cm} =  \lvert\det B_{t-s}\rvert^{-1/2}\left\Vert \mathcal{F}\left[e^{2\pi i\Phi_{t-s}}\left(e'(\Omega;t,s)-a'(t,s) \right)\Psi\right]\right\Vert _{L^1} \end{split}\\ 
		&\hspace{2.5cm} =  \lvert\det B_{t-s}\rvert^{-1/2} \left\Vert \mathcal{F}\left[\left(\Theta e^{2\pi i\Phi_{t-s}}\right)\left(e'(\Omega;t,s)-a'(t,s) \right) \Psi \right]\right\Vert _{L^1}\\
		&\hspace{2.5cm} \le \lvert\det B_{t-s}\rvert^{-1/2} \left\Vert \mathcal{F}\left[\Theta e^{2\pi i\Phi_{t-s}}\right]*\mathcal{F}\left[\left(e'(\Omega;t,s)-a'(t,s) \right)\Psi\right]\right\Vert_{L^1}\\
		& \hspace{2.5cm} \le  \lvert\det B_{t-s}\rvert^{-1/2} \left\Vert \mathcal{F}\left[\Theta e^{2\pi i\Phi_{t-s}}\right]\right\Vert _{L^1}\left\Vert \mathcal{F}\left[\left(e'(\Omega;t,s)-a'(t,s) \right) \Psi\right]\right\Vert _{L^1}. 
	\end{align} Since $\Theta e^{2\pi i\Phi_{t-s}}\in C_{\mathrm{c}}^{\infty}(\mathbb{R}^{2d})$, it is clear that we can find a constant $C_0 = C_0(t-s,\Psi)$ such that $\| \cF \left[\Theta e^{2\pi i\Phi_{t-s}}\right]\|_{L^1}< C_0$. Finally, using \eqref{eq conv main},
	\begin{align*} \left\Vert \mathcal{F}\left[\left(e'(\Omega;t,s)-a'(t,s) \right) \Psi\right]\right\Vert _{L^1} & = \| V_\Psi \left(e'(\Omega;t,s)-a'(t,s) \right) (0,\cdot) \|_{L^1} \\ & \le C'(\Psi) \| e'(\Omega;t,s)-a'(t,s) \|_{\Sjo}, \end{align*} and the claimed bound \eqref{eq int ker cov bound} thus follows with $C_2=C_0(t-s)C_1(t-s)C'(\Psi)$, where $C_1(t-s)$ comes from \eqref{eq int ampl bound}. 
	
	Uniform convergence of kernels on compact subsets is then clear: since
	\[ \| \lc k(\Omega;t,s)-u(t,s)\rc \Psi \|_{L^\infty} \le \| \cF \left[ \lc k(\Omega;t,s)-u(t,s) \rc \Psi \right] \|_{L^1},  \qedhere \] 
	for any compact subset $K \subset \rdd$ it is enough to choose $\Psi \in \cS(\rdd)$ with $\Psi \equiv 1$ on $K$. 
\end{enumerate}		
\end{proof}

\section*{Acknowledgements} 
\noindent It is a pleasure to thank Professor Fabio Nicola for helpful comments on various drafts of this note. 

%\bibliographystyle{amsalpha}
%\bibliography{reftesi}

\providecommand{\bysame}{\leavevmode\hbox to3em{\hrulefill}\thinspace}
\providecommand{\MR}{\relax\ifhmode\unskip\space\fi MR }
% \MRhref is called by the amsart/book/proc definition of \MR.
\providecommand{\MRhref}[2]{%
	\href{http://www.ams.org/mathscinet-getitem?mr=#1}{#2}
}
\providecommand{\href}[2]{#2}

\end{document}